\documentclass[preprint]{elsarticle} 
\makeatletter
	\def\ps@pprintTitle{%
 	\let\@oddhead\@empty
	\let\@evenhead\@empty
	\def\@oddfoot{\centerline{\thepage}}%
	\let\@evenfoot\@oddfoot}
\makeatother

\usepackage[
            colorlinks=true,%
            breaklinks=true,%
            linkcolor=blue,
            urlcolor=blue,%
            citecolor=blue,%
            pdftitle={Quadratization of Autonomous Partial Differential Equations: Algorithmic Solutions}, 
            pdfkeywords={Keywords},	
            pdfauthor={Albani Olivieri, Gleb Pogudin, Boris Kramer},		
            bookmarksopen=false,
            pdfpagemode=UseNone]{hyperref}

\usepackage{makecell}

\usepackage[margin = 1.2in]{geometry}
\usepackage[T1]{fontenc}
\usepackage[english]{babel}
\usepackage[latin1]{inputenc} 
\usepackage{mathtools}
\usepackage{amsmath}
\usepackage{amsfonts}
\usepackage{amsthm}
\usepackage{amssymb}
\usepackage{array}
\usepackage{algorithm} 
\usepackage{algorithmicx}
\usepackage[noend]{algpseudocode}
\usepackage{multirow}
\usepackage{tikz}
\usepackage{subcaption}
\usepackage{siunitx}
\usepackage{forest}

\usepackage{color}
\usepackage{url}
\usepackage{cleveref}
\usepackage{graphicx}
\usepackage{breakcites}
\usepackage{soul} 
\usepackage{enumitem}
\usepackage[title,titletoc,toc]{appendix}

\makeatletter
\newenvironment{module}[1][htb]{%
    \renewcommand{\ALG@name}{Module}
   \begin{algorithm}[#1]%
  }{\end{algorithm}}
\makeatother

\usetikzlibrary{trees}

\usepackage[textsize=tiny]{todonotes}

\newtheorem{theorem}{Theorem}
\newtheorem{example}{Example}

\newtheorem{remark}{Remark}

\newtheorem{proposition}{Proposition}
\newtheorem{problem}{Problem}
\newtheorem{definition}{Definition}

\renewenvironment{proof}%
{\noindent {\textbf{Proof}:} }%
{\hfill $\Box$ \\[1ex] }

\newcommand{\bit}{\begin{itemize}}
\newcommand{\eit}{\end{itemize}}
\newcommand{\ben}{\begin{enumerate}}
\newcommand{\een}{\end{enumerate}}

\newcommand {\real} {\mathbb{R}}

\DeclareMathOperator*{\argmin}{arg\,min}%
\newcommand{\bI}{\ensuremath{\mathbf{I}}}

\newcommand{\bV}{\ensuremath{\mathbf{V}}}

\newcommand{\bb}{\ensuremath{\mathbf{b}}}

\newcommand{\bp}{\ensuremath{\mathbf{p}}}

\newcommand{\bu}{\ensuremath{\mathbf{u}}}
\newcommand{\bv}{\ensuremath{\mathbf{v}}}
\newcommand{\bw}{\ensuremath{\mathbf{w}}}
\newcommand{\bx}{\ensuremath{\mathbf{x}}}

\newcommand{\cC}{\ensuremath{\mathcal{C}}}
\newcommand{\cD}{\ensuremath{\mathcal{D}}}

\newcommand{\cG}{\ensuremath{\mathcal{G}}}

\newcommand{\cI}{\ensuremath{\mathcal{I}}}

\newcommand{\cM}{\ensuremath{\mathcal{M}}}

\newcommand{\cP}{\ensuremath{\mathcal{P}}}

\newcommand{\cR}{\ensuremath{\mathcal{R}}}
\newcommand{\cS}{\ensuremath{\mathcal{S}}}

\newcommand{\cV}{\ensuremath{\mathcal{V}}}
\newcommand{\cW}{\ensuremath{\mathcal{W}}}

\newcommand{\difforder}{differential order}

\graphicspath{{./figures/}}
\usetikzlibrary{trees,shapes,fit,calc}

\begin{document}

	\begin{frontmatter}
		
        \title{Quadratization of Autonomous Partial Differential Equations: \\Algorithmic Solutions}

		\author[ucsd]{Albani Olivieri\corref{cor1}}
            \ead{a1olivieri@ucsd.edu}
		\author[affil2]{Gleb Pogudin}
            \author[ucsd]{Boris Kramer}
			
		\cortext[cor1]{Corresponding author}
		
		\address[ucsd] {Department of Mechanical and Aerospace Engineering, University of California San Diego, CA, United States}
		
		\address[affil2] {LIX, CNRS, \'Ecole Polytechnique, Institute Polytechnique de Paris, France}
		
		\begin{abstract}
    Quadratization for partial differential equations (PDEs) is a process that formally transforms a PDE with a nonquadratic right-hand side into a quadratic form by introducing auxiliary variables. Even though the existence and uniqueness of the solution of this quadratic form are, as of yet, unknown in the general case, this symbolic transformation has been used in diverse fields to simplify the analysis, simulation, and control of PDE models. This paper presents a rigorous definition of PDE quadratization, a sample case study on the solutions of quadratic representations, and theoretical contributions for the PDE quadratization problem of spatially one-dimensional PDEs, including results on existence and complexity. Its main focus, however, is introducing and analyzing \texttt{QuPDE}, an algorithm based on symbolic computation and discrete optimization that outputs a quadratization for any spatially one-dimensional polynomial or rational PDE. This algorithm is the first computational tool to find quadratizations for PDEs to date. We demonstrate \texttt{QuPDE}'s performance by applying it to fourteen nonquadratic PDEs in diverse areas such as fluid mechanics, space physics, chemical engineering, and biological processes. \texttt{QuPDE} delivers a low-order quadratization in each case, uncovering quadratic transformations with fewer auxiliary variables than those previously discovered in the literature for some examples, and finding quadratizations for systems that had not been transformed to quadratic form before. 
		\end{abstract}	
		
		\begin{keyword}
			Symbolic computation \sep nonlinear dynamical systems \sep partial differential equations \sep combinatorial optimization \sep model reduction 
		\end{keyword}
		
\end{frontmatter}

\section{Introduction} \label{sec:intro}

The analysis and simulation of complex dynamical processes described by partial differential equations (PDE) remain a centerpiece of scientific inquiry. In particular, nonpolynomial and nonquadratic PDEs describe a wide range of dynamical processes in science and engineering. Some examples are the cubic FitzHugh-Nagumo model \cite{FHN}, which explains the activation and deactivation dynamics of a spiking neuron; the cubic Brusselator model \cite{brusselator}, used to predict oscillations in chemical reactions; the rational Euler equations that govern inviscid flows \cite{EULEREQ}; and the nonadiabatic tubular reactor model \cite{Heinemann1981} with an exponential Arrhenius reaction term. While moderate to high-degree polynomial and nonpolynomial models are common, their analysis, control, and model reduction are challenging. Multiple approaches for representing these systems have been explored to simplify their study, such as graphs \cite{Stewart2004, Nijholt2020}, linear formulations \cite{Brunton2016}, and quadratic recasting. Quadratic representations are particularly attractive among these methods, as nonlinear models with quadratic right-hand side are arguably easier to study, yet allow highly nonlinear dynamical behavior. 

For example, the works \cite{Kramer2021, Levin1994, Tesi, Zarei2018, Papachristodoulou2005} present techniques and guarantees for estimating stability domains for quadratic and polynomial systems, while \cite{Amato2010} addresses finite-time stability problems for such systems. In control theory, finding approximations of solutions and guarantees of feedback control problems for polynomial and quadratic systems has been the subject of multiple studies, e.g., \cite{Conger2022, Merola2023, Almubarak2019}. Recent techniques for solving Hamilton-Jacobi-Bellman equations, which provide an optimal feedback law for nonlinear systems, have shown that exploiting quadratic \cite{Borggaard2020, Kramer2024} and polynomial structure \cite{Corbin2024a, Kalise2018, Borggaard2021, Corbin2025} allows for scalable algorithm design to dimensions otherwise not possible. In \cite{Brenig2018}, the authors present theory and guarantees on stability, integrability, and solutions for quasi-polynomial dynamical systems, defined as linear combinations of monomials with real-valued exponents. The works \cite{Borri2023, Guillot2019, Guillot2019a} explore quadratic recasting to improve Taylor series methods for finding numerical solutions to dynamical and algebraic systems. Similarly, the study \cite{Bayer2024} uses this type of reformulation to perform stability analysis of nonlinear dynamical systems with periodic solutions. The authors in \cite{Savageau1987} also leverage variable transformations for converting nonlinear differential equations into S-systems, a canonical form that offers mathematical tractability and efficient numerical solutions. In \cite{Parker2000}, the authors exploit variable transformations to obtain polynomial forms to generate Taylor series solutions for PDEs through the Picard iteration.

Another scientific field that relies on quadratic representations is the area of reduced-order modeling. Transforming or lifting nonquadratic systems into quadratic ones was first introduced in \cite{GU2011} in the context of model reduction, and has since been used to obtain different---often quadratic---model reformulations for several purposes: to learn stable reduced models \cite{SKP_PIregulartizationOPINF, goyal2023guaranteed}, to perform projection-based model reduction \cite{KRAMER2019}, to develop system-theoretic reduced models \cite{BENNERBREITEN, KW2019_balanced_truncation_lifted_QB, Ritschel2020}, and for data-driven reduced-order modeling \cite{LIFT&LEARN, McQuarrie2021, Guo2022, McQuarrie2023, Khodabakhshi2022, Swischuk2020, Qian2019, Jain2021}. Analog computation with chemical reaction networks has also exploited quadratic model reformulation. By lifting differential equation models into quadratic form, a chemical reaction network can be transformed into a bimolecular one to obtain an elementary chemical reaction representation. This result was fundamental in proving the Turing completeness of elementary chemical reactions \cite{hemery2020complexity, TuringCompleteness, BOURNEZ2007317}. 

The works mentioned above can also be characterized by the type of system they focus on. In particular, \cite{Levin1994, Tesi, Zarei2018, Papachristodoulou2005, Amato2010, Merola2023, Almubarak2019, Brenig2018, Borri2023, Guillot2019, Guillot2019a, Savageau1987, GU2011, Ritschel2020, hemery2020complexity, TuringCompleteness, BOURNEZ2007317, Bayer2024} address ODE dynamical systems, \cite{Conger2022} focuses on discrete-time dynamical systems, \cite{Kramer2021, Kramer2024, Borggaard2020, Corbin2024a, Borggaard2021, Corbin2025, SKP_PIregulartizationOPINF, goyal2023guaranteed, BENNERBREITEN, KW2019_balanced_truncation_lifted_QB, McQuarrie2021} apply their methods on semi-dicretized PDEs, and \cite{Kalise2018, Parker2000, KRAMER2019, LIFT&LEARN, Guo2022, McQuarrie2023, Khodabakhshi2022, Swischuk2020, Qian2019, Jain2021} leverage quadratic and polynomial representations for PDEs in their continuous form.

Throughout this paper, we call a \textit{quadratization} the set of auxiliary functions that one may need to add to obtain a quadratic form of a PDE system with higher-degree polynomial (or nonpolynomial) right-hand side expressions. The quadratic reformulation of the PDE will comprise the rewritten form of the original differential equations plus the corresponding differential equations for the auxiliary functions. Moreover, we treat the quadratization problem as a symbolic one. Therefore, we refer to the PDEs right-hand side functions as variables, including those introduced by a quadratization. We call a quadratization optimal---see Section~\ref{sec:theory} for a formal definition---if it introduces the minimum number of auxiliary variables, and a monomial quadratization if the variables introduced are one-term polynomials. Let us illustrate PDE quadratization with an example. 

\begin{example}
    Consider the PDE for the unknown function $u(x,t)$ as
\begin{equation}
\label{eq:intro-ex}
    \dfrac{\partial u}{\partial t} = \dfrac{\partial u}{\partial x} + u^3.
\end{equation}
Let the initial boundary value problem be defined by the initial condition $u(x,0)=u_0(x)$ and the boundary conditions $u(a,t)=c$ and $u(b,t)=0$ on the domain $[a,b]$, with $a,b,c\in\real$ and $b> a$. 
To bring~\eqref{eq:intro-ex} into quadratic form, we formally define the auxiliary variable $w = w(u)\coloneq u^2$. We now need to add a new differential equation to~\eqref{eq:intro-ex}: 
\begin{equation}
\label{eq:intro-ex-2}
    \frac{\partial w}{\partial t} = 2u\frac{\partial u}{\partial t}= 2u\frac{\partial u}{\partial x} + 2u^4.
\end{equation} 

Using the definition of $w$, we write the new quadratic PDE system as
\begin{align}
\label{eq:quad-intro}
    \dfrac{\partial u}{\partial t} =\dfrac{\partial u}{\partial x} + uw \quad \textnormal{ and } \quad \dfrac{\partial w}{\partial t} = 2u\frac{\partial u}{\partial x} + 2w^2,
\end{align}
with the initial conditions for the new quadratic form given by $u(x,0)= u_0(x),$ $w(x,0)=u_0(x)^2$, and the boundary conditions $u(a,t)=c$, $u(b,t)=0$, $w(a,t)=c^2$ and $w(b,t)=0$. Since the right-hand side expressions in \eqref{eq:quad-intro} are quadratic in the states $\{u, w\}$, we say that the set $\{u^2\}$ is a quadratization for the PDE~\eqref{eq:intro-ex}.
\end{example}

There is a lack of computational tools for finding quadratizations for PDE systems; so far, they have been derived by hand. There are several reasons for that. First, finding an optimal quadratization is an NP-hard problem---see \cite{hemery2020complexity} for the case of ODEs and our result in Section~\ref{sec:exisNP} for the PDE case.
Second, due to the multivariate nature of PDEs, the spatial derivatives of the unknown functions and auxiliary variables can become part of the quadratic transformations. As a result, the process becomes more complex, since the quadratization should include the derivatives of each auxiliary variable up to a defined order depending on the smoothness of the solution.
These challenges have made it difficult to develop computational methods and theoretical guarantees for quadratization of PDEs---in contrast to the ODE setting, where multiple studies have focused on the quadratization problem, and two software tools are available to compute a quadratization for any ODE \cite{quadODE, hemery2020complexity, Carothers2005, GU2011, bychkov2023exact}. 

In \cite{quadODE}, the authors present the \texttt{QBee} algorithm that, given a nonquadratic polynomial ODE system, finds an optimal quadratic transformation by introducing auxiliary variables that are monomials in the original variables. An extension to \texttt{QBee} was introduced in the later work \cite{bychkov2023exact} to handle a special case of quadratization for arbitrary-dimensional ODEs. This type of transformation, referred to as dimension-agnostic quadratization, brings to quadratic form a class of ODEs that maintain the same symbolic structure even as their dimension varies and yields a quadratization that can be generalized to any dimension. An example of such an ODE system is a semi-discretized PDE that is affine in its spatial derivatives, where the nonlinear terms remain symbolically identical when the discretization size increases, and the coupling term from the approximation of the spatial derivatives is linear. With this extension of \texttt{QBee}, one can find quadratizations for such PDEs by computing their semi-discretization as a preprocessing step. The work in \cite{bychkov2023exact} is the only study that could partially address the problem of finding quadratizations for a specific class of PDEs. The shortcomings of this approach are: (i) The spatial discretization of a PDE can be time-consuming and problem-specific; (ii) \texttt{QBee} only produces quadratizations for a subclass of PDEs, the ones that are affine in the spatial derivatives; and (iii) \texttt{QBee} guarantees optimal monomial quadratizations only for ODE systems. For dimension-agnostic quadratizations, it does not guarantee optimality (\cite[Section~5.5]{bychkov2023exact}), which can lead to large quadratic PDE systems after lifting (see Section~\ref{sec:res-qbee}). 

Thus, to the best of our knowledge, no algorithm for finding quadratizations for polynomial or rational PDEs exists to date, despite the high interest in many fields, as referenced above, in using quadratically recasted models. Moreover, no theory has been developed on the PDE quadratization problem or the resulting quadratic models. To address these gaps, the main contributions of this work are (i) initial theoretical groundwork for the quadratization of spatially one-dimensional PDEs by providing a formal definition, a result on the solutions of the quadratic representation of a particular PDE, and theory about the underlying algorithmic problem, and (ii) an algorithm based on symbolic computation and discrete optimization techniques that outputs a low-order quadratization for any spatially one-dimensional PDE with polynomial or rational nonlinearities on its right-hand side. 

This paper is structured as follows: Section~\ref{sec:theory} shows our theoretical contributions related to the quadratization problem applied to PDEs. Section~\ref{sec:alg} presents the algorithm \texttt{QuPDE}, which produces quadratizations for spatially one-dimensional polynomial and rational PDEs. Section~\ref{sec:bench-res} demonstrates the broad applicability of the \texttt{QuPDE} algorithm on fourteen nonlinear and nonquadratic PDEs from multiple fields, including chemistry, space physics, biology, fluid dynamics, and engineering. We conclude with a summary and an outlook on topics of further research in Section~\ref{sec:conclusions}.

\section{Theoretical Results} \label{sec:theory}
In Section~\ref{sec:notation}, we introduce the notation used throughout the paper and frame the scope of this work. Then, we present a rigorous definition for the quadratization of a spatially one-dimensional PDE in Section~\ref{sec:def}. Section~\ref{sec:ex-mkdv} shows how solutions of a lifted system obtained by a quadratization are related to the original PDE for the modified Korteweg-de Vries equation. In Section~\ref{sec:exisNP}, we show theoretical results related to the existence of such transformations and the complexity of finding an optimal one.

\subsection{Notation and limitations}
\label{sec:notation}
Let $\Omega\subseteq\mathbb{R}$ denote the spatial domain and $x\in \Omega$ a point in it. We use $C^d(\Omega)$ to refer to the space of functions that are $d$-times continuously differentiable on the set $\Omega$. 
When referring to PDE systems, we denote by $\bu:\Omega\times[0,\infty) \rightarrow\mathbb{R}^n$ a vector of functions such that $u_i: \Omega\times[0,\infty)\rightarrow \mathbb{R}$, where $i=1,\dots, n$. For brevity, we often omit the explicit dependence of $\bu$ on the spatial and temporal variables, and write $\bu = \bu(x,t)$. Moreover, we use $\frac{\partial u}{\partial x} = \partial_xu = u_x$ interchangeably to refer to the partial derivatives of $u$; for higher-order derivatives, we write $\frac{\partial^h u}{\partial x^h} = \partial^h_xu=u_{x\dots x}.$ We denote the vector of the $x$-partial derivatives of $\bu$ of order $h$ as $\partial_x^h\bu$, i.e.,
   \[
    \partial_x^h\bu := \begin{bmatrix}
        \partial^h_xu_1 & \dots &  \partial^h_xu_n
    \end{bmatrix}^\top.
    \]
To formulate the quadratization problem as a symbolic computing problem, one with explicit mathematical expressions and with exact input and outputs, let $\mathbb{F}$ denote a field, and consider $\upsilon_1, \ldots, \upsilon_n$ \textit{formally} as symbolic variables that could refer to functions or their partial derivatives, e.g., $\upsilon_1:=u(x, t)$, $\upsilon_2:=u_x(x,t)$ or $\upsilon_3:=u_{xx}(x,t)$. We use the notation $\mathbb{F}[\upsilon_1,\dots, \upsilon_n]$ to refer to the set of all polynomials in the variables $\upsilon_j$ with $j=1,\dots, n$. More generally, let $\boldsymbol{\upsilon}_1, \dots, \boldsymbol{\upsilon}_m$ be vectors with $\boldsymbol{\upsilon}_i:=\begin{bmatrix}
        \upsilon_1^{(i)} & \dots &  \upsilon_n^{(i)}
    \end{bmatrix}^\top$, for all $i=1, \dots, m$. Then, we refer to the set of polynomials in the variables $\upsilon_j^{(i)}$ as
    \[\mathbb{F}[\boldsymbol{\upsilon}_1, \dots, \boldsymbol{\upsilon}_m]=\mathbb{F}[\upsilon^{(1)}_1,\dots, \upsilon^{(1)}_n, \dots, \upsilon^{(m)}_1,\dots, \upsilon^{(m)}_n].\]
    Let $\bv$ be a vector with $d$ polynomial entries in $\mathbb{F}[\upsilon_1,\dots, \upsilon_n]$. Then we write $\bv\in\mathbb{F}^d[\upsilon_1,\dots, \upsilon_n]$, where $\mathbb{F}^d[\upsilon_1,\dots, \upsilon_n]$ represents the space of all vector-valued functions with $d$ components in $\mathbb{F}[\upsilon_1,\dots, \upsilon_n]$. We call a \textit{monomial} a polynomial that has only one term, and the \textit{total degree} of a monomial is defined as the sum of the degrees of all its variables.

Let $w_1\in \mathbb{R}[\bu, \partial_x\bu, \dots, \partial^{c_1}_x\bu], \dots, w_\ell\in \mathbb{R}[\bu, \partial_x\bu, \dots, \partial^{c_\ell}_x\bu]$ and $\bw := \begin{bmatrix}
    w_1 & \dots & w_{\ell}
\end{bmatrix}$. For each $i=1,\dots, \ell$, let $c_i$ denote the maximum spatial-derivative order appearing in $w_i$, and set $\cC:=\{c_1, \dots, c_\ell\}$; for example, if $w_1, \dots, w_{\ell}$ have the same maximum spatial-derivative order equal to 1, then $c_1 = \dots = c_\ell = 1$, and $\cC:=\{1, \dots, 1\}$. For each polynomial $w_i$ and a fixed integer $k$, we define $\partial_x^{\leq k-c_i}\bw_i$ as the vector with the spatial derivatives of $w_i$ up to order $k-c_i$, i.e.,
\[\partial_x^{\leq k-c_i}\bw_i: = \begin{bmatrix}
    w_i & \partial_xw_i & \dots & \partial_x^{k-c_i}w_i
\end{bmatrix}^\top.\]
Using this notation, we define the vector
\begin{equation}
\label{eq:notation-ord}
    \partial_x^{\leq k-\cC}\bw: = \begin{bmatrix}
    \partial_x^{\leq k-c_1}\bw_1 & \dots & \partial_x^{\leq k-c_{\ell}}\bw_{\ell}
\end{bmatrix}^\top,
\end{equation}
 which contains the spatial derivatives of the variables in $\bw$ up to the order determined by the set $\cC$ and the integer $k$. 

\begin{remark}
\label{rem:scope}
    Before showing our theoretical results on PDE quadratization, we emphasize that this work does not focus on the analysis of certain aspects of PDE theory, such as general results on the existence and uniqueness of solutions to the boundary value problem associated with the quadratic PDE reformulations and its properties, such as stability. While this analysis is important and remains a direction for future research, the scope of this work is to construct and analyze an algorithm to obtain new symbolic (quadratic) representations for PDEs that are not intended to be solved or discretized. Instead, these transformations, which we derive from the PDE's symbolic structure, can be exploited for model learning \cite{LIFT&LEARN, Netto2021, Lew2023}, system analysis \cite{Kramer2021}, and control \cite{Kramer2024}.

\end{remark}

\subsection{Definitions} \label{sec:def}

We offer a rigorous definition for quadratization of a spatially one-dimensional evolutionary PDE.

\begin{definition}(Quadratization for PDEs)
    \label{def:pde-quad}
        Let $\bu \coloneq \begin{bmatrix}
            u_1 & \dots & u_n
        \end{bmatrix}^\top$ be a vector of unknown functions $u_1(x,t), \dots, u_n(x,t)$. Consider a polynomial, spatially one-dimensional,         evolutionary PDE system of the form
        \begin{equation}
        \label{eq:def-pde}
            \partial_t u_{1} = p_1(\bu, \partial_x\bu, \dots, \partial^h_x\bu), \hspace{5px} \dots, \hspace{5px} \partial_t u_{n} = p_n(\bu, \partial_x\bu, \dots, \partial^h_x\bu),
        \end{equation}
with $p_1, \dots, p_n \in \mathbb{R}[\bu, \partial_x\bu, \dots, \partial^h_x\bu]$. 
For polynomials $$w_1\in \mathbb{R}[\bu, \partial_x\bu, \dots, \partial^{c_1}_x\bu], \dots, w_\ell\in \mathbb{R}[\bu, \partial_x\bu, \dots, \partial^{c_\ell}_x\bu],$$ define the vector 
$\bw := \begin{bmatrix}
    w_1 & \dots & w_{\ell}
\end{bmatrix}^\top$
and the set 
\begin{equation*}
\label{eq:quad-vars}
    \cW := \{w_1, \dots, w_{\ell}\}.
\end{equation*}

Then, $\cW$ is said to be a \textnormal{quadratization} of \textnormal{\difforder} $k$ for the PDE~\eqref{eq:def-pde}, 
if there are polynomials $g_1, \dots, g_{n+\ell}\in \mathbb{R}[\bu, \partial_x\bu, \dots, \partial^{k}_x\bu, \partial_x^{\leq k-\cC}\bw]$ of total degree at most two, such that 
\begin{equation}
    \label{eq:def-quad}
\begin{aligned}
    \partial_tu_{1} = g_1, \hspace{5px} &\dots, \hspace{5px} \partial_tu_{n} = g_n, \\
    \partial_tw_{1} = g_{n+1}, \hspace{5px} &\dots, \hspace{5px}\partial_tw_{\ell}= g_{n+\ell},
\end{aligned}
\end{equation}
for every $u_1, \dots, u_n\in C^{k}(\Omega)$. Here, the notation $\partial_x^{\leq k-\cC}\bw$ refers to the vector of partial derivatives introduced in~\eqref{eq:notation-ord} with $\cC=\{c_1, \dots, c_\ell\}$. The cardinality of $\cW$ is called the \textnormal{order of quadratization}, while the differential order $k$ refers to the maximal spatial-derivative order of $\bu$ in the resulting quadratic PDE system. A quadratization of the smallest possible order is called an \textnormal{optimal quadratization}. If all polynomials $w_1,\dots, w_\ell$ are monomials, the quadratization is called a \textnormal{monomial quadratization}. 
\end{definition}

\subsection{Solutions of quadratic representations} \label{sec:ex-mkdv}
To better understand the solutions of the PDE lifted representations obtained through quadratization variables, and their relation with the original PDE solutions, we present Proposition~\ref{prop:sol-lifted-mkdv}. This result, illustrated on the modified Korteweg-de Vries (KdV) equation, shows the relation between the solutions of the quadratic representation and those of the original PDE, and establishes conditions to conclude existence and uniqueness for the initial value problem of the lifted system.

The modified KdV equation is an integrable nonlinear PDE that describes acoustic waves in anharmonic lattices and Alfv\'en waves in collisionless plasma~\cite{Wadati1973}. In this analysis, we consider this PDE in the general form
\begin{equation}
\label{eq:kdv-sol-proof}
    u_{t}=-6u^{2}u_x -u_{xxx}
\end{equation}
with the initial condition $u_0(x):=u(x,0)$. This equation admits a known family of real-valued solutions $u(x,t)\in C^{\infty}(\mathbb{R})$~\cite{Wadati1973}, which can be obtained through the inverse scattering method. Moreover, it has unique strong solutions in a defined interval if $u_0\in H^s(\mathbb{R})$ with $s\geq 1/4$~\cite{Linares2015}.  

Consider the quadratization $\{u^2\}$ for~\eqref{eq:kdv-sol-proof}, which we obtained by applying our algorithm \texttt{QuPDE} (see Section~\ref{sec:res} for computational details). We now derive a quadratic representation of~\eqref{eq:kdv-sol-proof} defining the new variable $w(x,t):=u^2$:
\begin{align}
    \label{eq:lifted-mkdv-1}
    v_t &= -3w_xv-v_{xxx}, \\
    \label{eq:lifted-mkdv-2}
    w_t &= -2v_{xxx}v - 6w_xw.
\end{align}
Here, we rename the state $u(x,t)$ as $v(x,t)$ to emphasize that the lifted system is studied independently of the original equation. The following proposition relates the solutions of the system~\eqref{eq:lifted-mkdv-1}-\eqref{eq:lifted-mkdv-2} to the original PDE.
\begin{proposition}
\label{prop:sol-lifted-mkdv}
    Let $\ell\geq3$ and $u(x,t)\in C^\ell(\mathbb{R})$ be a solution for~\eqref{eq:kdv-sol-proof} with the initial condition $u_0(x):=u(x,0)$. If $v_0(x):=v(x,0)=u_0(x)$ and $w_0(x):=w(x,0)=u_0^2(x)$, then the pair 
\begin{equation}
\label{eq:uniq-sol}
    (u(x,t), u^2(x,t))
\end{equation} 
solves the lifted system~\eqref{eq:lifted-mkdv-1}-\eqref{eq:lifted-mkdv-2} for $t>0$ and $x\in\mathbb{R}$. Moreover, if $u_0(x)\in H^s(\mathbb{R})$ with $s\geq 1/4$, then the solution $u(x,t)$ is unique, and hence the lifted system has a unique solution given by~\eqref{eq:uniq-sol}.
\end{proposition}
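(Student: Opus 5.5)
The plan is to handle the two claims separately: first a direct verification that $(u,u^2)$ solves the lifted system, then a uniqueness statement for the lifted system built on the cited well-posedness result of~\cite{Linares2015}.

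\textbf{Existence.} Set $v:=u$ and $w:=u^2$. Since $u\in C^\ell$ with $\ell\geq 3$, the derivatives $u_{xxx}$, $u_t$ and $w_x=2uu_x$ are continuous, so every term in the system makes sense classically. Substituting $w_x=2uu_x$ into the right-hand side of~\eqref{eq:lifted-mkdv-1} gives $-3(2uu_x)u-u_{xxx}=-6u^2u_x-u_{xxx}=u_t=v_t$, using~\eqref{eq:kdv-sol-proof}. For~\eqref{eq:lifted-mkdv-2}, the chain rule gives $w_t=2uu_t=2u(-6u^2u_x-u_{xxx})=-12u^3u_x-2uu_{xxx}$. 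On the other side, $-2v_{xxx}v-6w_xw=-2uu_{xxx}-6(2uu_x)u^2$, which is the same expression. The initial data match by construction, since $v_0=u_0$ and $w_0=u_0^2$. This settles the first claim.

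\textbf{Uniqueness.} Uniqueness of $u$ for data $u_0\in H^s(\mathbb{R})$ with $s\geq 1/4$ is taken directly from~\cite{Linares2015}. To transfer it to the lifted system, take any solution $(v,w)$ of~\eqref{eq:lifted-mkdv-1}-\eqref{eq:lifted-mkdv-2} with $v_0=u_0$ and $w_0=u_0^2$, smooth enough for the manipulations below. The key step is to show that the constraint $w=v^2$ is preserved in time. Let $z:=w-v^2$. Then
\[
z_t=w_t-2vv_t=-2v_{xxx}v-6w_xw+6vw_xv+2vv_{xxx}=-6w_x(w-v^2)=-6w_x z .
\]
So $z$ satisfies the linear transport equation $z_t+6w_xz=0$, with $z(x,0)=0$. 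Because there is no $x$-derivative of $z$, this is an ODE in $t$ for each fixed $x$, and its solution is
\[
z(x,t)=z(x,0)\exp\Bigl(-6\int_0^t w_x(x,s)\,ds\Bigr)=0,
\]
provided $w_x$ is locally integrable in time. Hence $w=v^2$ for all $t$.

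Substituting $w=v^2$ back into~\eqref{eq:lifted-mkdv-1} gives $v_t=-6v^2v_x-v_{xxx}$, so $v$ solves the modified KdV equation~\eqref{eq:kdv-sol-proof} with data $u_0$. By the cited uniqueness, $v=u$, and therefore $w=u^2$. This yields the unique solution~\eqref{eq:uniq-sol}.

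\textbf{Main obstacle.} The delicate point is regularity. The uniqueness in~\cite{Linares2015} is formulated in a specific solution class, and the argument above also needs $(v,w)$ regular enough that the computation of $z_t$ and the integrating-factor argument are legitimate. I would state uniqueness within the class of solutions of the lifted system having at least the regularity of $u$ (for example $C^\ell$, or the strong-solution class of~\cite{Linares2015}). The plan is then to check that, once $w=v^2$ is known, $v$ lies in the class where the cited uniqueness applies.
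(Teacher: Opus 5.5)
Your proposal is correct and follows essentially the same route as the paper. Both arguments check that $(u,u^2)$ solves the lifted system, show that $E=w-v^2$ satisfies $E_t=-6w_xE$ with $E(x,0)=0$, conclude $w=v^2$ via the integrating factor along each fixed $x$, reduce to the modified KdV equation, and then invoke the uniqueness result of \cite{Linares2015}. Your explicit substitution check and your caveat about the solution class in which uniqueness holds are, if anything, more careful than the paper, which asserts the first step ``by construction'' and leaves the regularity requirement implicit.
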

\begin{proof}
    First, note that the pair~\eqref{eq:uniq-sol} is, by construction of the quadratic lifting and with the choice of $w_0(x)=u_0^2(x)$, a solution to~\eqref{eq:lifted-mkdv-1}-\eqref{eq:lifted-mkdv-2}. Now, we will show that the conservation law
    \begin{equation}
    \frac{\partial}{\partial t}(w-v^2) = 0
\end{equation}
holds for the entire interval of existence. We introduce $E=E(x,t):=w-v^2$ and differentiate it with respect to time: 
\begin{align}
    \frac{\partial E}{\partial t} &= \frac{\partial}{\partial t}(w-v^2) =  w_t -2v_tv \\
    & = -2v_{xxx}v - 6w_xw +6w_xv^2 +2v_{xxx}v \\
    & = 6w_x(v^2-w) \\
    & = -6w_xE.
\end{align}
For a fixed $x$, the equation above is a scalar linear ODE in $t$ with the solution: 
\begin{equation}
    E(x,t) = E(x,0)e^{-6\int_0^tw_x(x,\tau)d\tau}.
\end{equation}
Note $E(x,0) = w_0(x) -v_0^2(x)= u_0^2(x) - u_0^2(x) =0$. Therefore, the conservation law $\frac{\partial}{\partial t}(w-v^2) = 0$ holds for all time, and $w=v^2$ as long as a solution exists.

We substitute $w=v^2$ into~\eqref{eq:lifted-mkdv-1}: 
\begin{align}
    v_t &= -3(v^2)_xv-v_{xxx} = -6v^2v_x-v_{xxx},
\end{align}
which yields the modified Korteweg-de Vries equation. Therefore, $v(x,t)$ is a solution to the original equation's initial value problem:
\begin{equation}
\label{eq:IVP}
    \begin{cases}
    v_t +6v^2v_x+v_{xxx}=0, \\
    v(x,0)=v_0(x).
\end{cases}
\end{equation}
\noindent Moreover, if~\eqref{eq:IVP} satisfies the conditions in~\cite[Theorem~7.1]{Linares2015}, then $v(x,t)$ is a unique strong solution to the IVP~\eqref{eq:IVP}. Additionally, with $v_0(x)=u_0(x)$, we obtain $u=v$ and $w=u^2$ throughout the interval defined in \cite[Theorem 7.1]{Linares2015}. By the uniqueness of $u(x,t)$, we conclude that the solution $w(x,t)$ is also unique, yielding a unique solution for the lifted system~\eqref{eq:lifted-mkdv-1}-\eqref{eq:lifted-mkdv-2}.
\end{proof}

The first statement of Proposition~\ref{prop:sol-lifted-mkdv} could be generalized to establish the existence of solutions for quadratic representations, given sufficient regularity of the original solutions. 

\begin{remark}
    By construction, if the PDE system~\eqref{eq:def-pde} has existing solutions $u_1, \dots, u_n\in C^{k}(\Omega)$, then its quadratic representation~\eqref{eq:def-quad} should have a solution comprised of $\bu$ and the lifting map $\bw$, with initial conditions given by the image of the original initial data under the lifting map.
\end{remark}

\subsection{Existence and NP-hardness} \label{sec:exisNP}

We provide a theoretical result regarding the existence of quadratizations for polynomial PDEs. Then, we discuss the complexity of finding an optimal quadratization.  

    \begin{theorem}[Existence of a PDE quadratization]
    \label{thm:existence}
    A PDE system of the form~\eqref{eq:def-pde} of order $h$ has a monomial quadratization of \textit{\difforder} $3h$, given $u_1,\dots, u_n\in C^{\ell}(\Omega)$, where $\ell \geq3h$. 
    \end{theorem}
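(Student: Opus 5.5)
The plan is to reduce to the classical ODE argument, where one introduces all monomials whose exponent vector is bounded componentwise by a fixed vector. We treat the jet variables $z := (\bu, \partial_x\bu, \dots, \partial_x^{2h}\bu)$ as formal symbols. Their time derivatives are determined by the PDE~\eqref{eq:def-pde} through $\partial_t \partial_x^j u_i = \partial_x^j p_i$, for $0\le j\le 2h$. Each $\partial_x^j p_i$ is a polynomial in the jet up to order $h+j\le 3h$, which is why differential order $3h$ and regularity $C^{\ell}$ with $\ell\geq 3h$ are needed: they make these identities hold pointwise for actual solutions.

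The first step is a structural lemma about $\partial_x^j p_i$, proved by induction on $j$ via the Leibniz rule. It says that every monomial of $\partial_x^j p_i$ contains at most one factor $\partial_x^{r}u_k$ with $r>2h$, and that such a factor occurs to the first power. The reason is that each differentiation raises the order of a single factor by one. Reaching order $r>2h$ from order at most $h$ therefore uses more than $h$ of the $j\le 2h$ differentiations, which leaves fewer than $h$ for all the other factors; so no second factor can exceed order $2h$. Hence
\[
\partial_x^j p_i = A_{ij}(z) + \sum_{k,\,r>2h} B_{ijkr}(z)\,\partial_x^r u_k ,
\]
where $A_{ij}$ and $B_{ijkr}$ are polynomials in $z$ only.

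The second step is to choose $\beta$ as the componentwise maximum of the exponent vectors, in $z$, of all monomials of all $A_{ij}$ and $B_{ijkr}$. We then take $\cW$ to be the set of monomials $z^\alpha$ with $\alpha\le\beta$ and $|\alpha|\ge 2$, or a suitably enlarged bounded box (see below). For a monomial $z^\alpha\in\cW$, the chain rule gives $\partial_t z^\alpha = \sum_{s}\alpha_s z^{\alpha-e_s}\,\partial_t z_s$. A term coming from $A_{ij}$ has the form $z^{\alpha-e_s} z^{\gamma}$ with $\gamma\le\beta$. Both factors lie in $\cW\cup z\cup\{1\}$, so the term is quadratic, exactly as in the ODE case. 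Spatial derivatives of elements of $\cW$ never need to be used; we only use the $w_i$ themselves together with the jet of $\bu$ up to order $3h$, both of which the definition allows.

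The main obstacle is the terms $z^{\alpha-e_s}\,z^{\gamma}\,\partial_x^r u_k$ with $r>2h$. These are a priori cubic, since $\partial_x^r u_k$ is an admissible linear symbol but $z^{\alpha-e_s}z^\gamma$ may have exponent up to $2\beta$. My first attempt would be to merge $z^{\alpha-e_s}z^{\gamma}$ into a single variable. That requires enlarging $\cW$ so that it is closed under the relevant products, which risks an unbounded recursion.

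To avoid this, I would use the fact that high-order factors only appear when $s$ indexes a derivative of order $>h$ (that is, $j>h$). I would then take $\cW$ with exponents bounded by $\beta$ in the coordinates of order $\le h$, and restrict the exponents in coordinates of order in $(h,2h]$ so that $\alpha_s\le 1$ there. With this choice, $z^{\alpha-e_s}$ involves only low-order coordinates. The product $z^{\alpha-e_s}B_{ijkr}$ should then land in a box that is mapped into itself by this rule; in particular, the coefficients $B$ depend only on the jet up to order $h$, which is what keeps the box closed. Checking this closure, so that the box is finite and invariant, is the step I expect to require the most care. Once it holds, every term of $\partial_t w$ becomes a (variable in $\cW$) times (a jet symbol of order $\le 3h$), which completes a monomial quadratization of differential order $3h$.
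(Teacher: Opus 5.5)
Your structural lemma (input-affinity of $\partial_x^j p_i$) is correct and matches the first step of the paper's proof. The closure step, which you flag as the crux, does not go through as designed, and the premise you rely on for it is false. The counting only shows that, in a monomial of $\partial_x^j p_i$ containing $\partial_x^r u_k$ with $r>2h$, every other factor has order at most $h+(j-(r-h))=j+2h-r$. This can be as large as $2h-1$, so it does not give order $\le h$. For example, with $h=2$ and $p=u_{xx}^2$,
$\partial_x^4p=2u_{xx}\,\partial_x^6u+8u_{xxx}\,\partial_x^5u+6(\partial_x^4u)^2$,
so the coefficient of the input $\partial_x^5u$ is $8u_{xxx}$, of order $3>h$.

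The restricted box (exponent at most one in coordinates of order in $(h,2h]$) also fails against the drift parts $A_{ij}$, which can contain high powers of exactly those coordinates. For $p=u_{xx}^4$, $\partial_x^4p$ contains $24u_{xxx}^4$. Hence for the box element $u_{xx}\,\partial_x^4u$, the time derivative contains $24u_{xx}u_{xxx}^4$. This is not a product of two of the factors you allow (box elements, jet coordinates, $1$), since each of them has degree at most one in $u_{xxx}$. So the full box $\alpha\le\beta$ handles the drift but not the input terms, the restricted box breaks the drift, and nothing in the proposal yields a finite set invariant under both.

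The missing idea is the correct consequence of your own counting. Since $j+2h-r<j$ for $r>2h$, the coefficient of every input $\partial_x^r u_k$ in $\partial_t(\partial_x^j u_i)$ involves only derivatives of order strictly less than $j$. The paper views the differentiated system as an input-affine ODE with states $\bu,\dots,\partial_x^{2h}\bu$ and inputs $\partial_x^{2h+1}\bu,\dots,\partial_x^{3h}\bu$. It checks exactly this triangularity and invokes \cite{bychkov2023exact} (Proposition~3.9 with Lemma~3.13) to obtain an input-free monomial quadratization.

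If you want a self-contained construction instead, replace your box by a weighted one:
\begin{itemize}
\item Give jet coordinates of order $a$ the weight $d^a$, where $d$ is the maximal degree of the $p_i$. Then every monomial of a coefficient $B$ in the equation of an order-$j$ coordinate has weight at most $(d-1)d^{j-1}<d^j$.
\item Let $\cW$ be all monomials in the jet up to order $2h$ of weight at most $D$ and degree at least two.
\item Replacing a coordinate by a monomial of its $B$ strictly lowers the weight. So each input term is an input times an element of $\cW\cup\{1\}$ or a jet coordinate.
\item Let $E$ bound the weights of the monomials of the $A_{ij}$. Drift terms then have weight at most $D+E$, and they split greedily into two monomials of weight at most $D$ once $D\ge E+d^{2h}$.
\end{itemize}
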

    \begin{proof}
Since we look for a quadratization with \textnormal{\difforder} $3h$, we assume that $u_1, \dots, u_n$ are sufficiently smooth functions to compute $2h$ spatial derivatives of the original system~\eqref{eq:def-pde}. Computing these derivatives results in the following equations
\begin{align*}
\partial_t u_{1} &= p_1(\bu, \partial_x\bu, \dots, \partial^h_x\bu), & \quad &\dots, & \quad \partial_t u_{n} &= p_n(\bu, \partial_x\bu, \dots, \partial^h_x\bu), \\
\partial_x(\partial_t u_{1}) &= \partial_xp_1(\bu, \partial_x\bu, \dots, \partial^h_x\bu), & \quad & \dots, & \quad \partial_x(\partial_t u_{n}) &= \partial_xp_n(\bu, \partial_x\bu, \dots, \partial^h_x\bu), \\
& \quad & \vdots & \quad &\\
\partial_x^{2h}(\partial_t u_{1}) &= \partial_x^{2h}p_1(\bu, \partial_x\bu, \dots, \partial^h_x\bu), & \quad & \dots, &\quad \partial_x^{2h}(\partial_t u_{n}) &= \partial_x^{2h}p_n(\bu, \partial_x\bu, \dots, \partial^h_x\bu).
\end{align*}
Since $u_1,\dots, u_n\in C^{3h}(\Omega)$, we can apply Schwarz's theorem to interchange the order of differentiation. Then, we write the equivalent system 
\begin{equation}
\label{eq:proof-quad}
\begin{aligned}
    \partial_t u_{1} & = p_1(\bu, \partial_x\bu, \dots, \partial^h_x\bu), & \quad &\dots, & \quad \partial_t u_{n} & = p_n(\bu, \partial_x\bu, \dots, \partial^h_x\bu), \\
\partial_t(\partial_x u_{1}) &= \partial_xp_1(\bu, \partial_x\bu, \dots, \partial^h_x\bu), & \quad & \dots, & \quad  \partial_t(\partial_x u_{n}) & = \partial_xp_n(\bu, \partial_x\bu, \dots, \partial^h_x\bu), \\
& \quad & \vdots & \quad &\\
\partial_t(\partial_x^{2h} u_{1}) &= \partial_x^{2h}p_1(\bu, \partial_x\bu, \dots, \partial^h_x\bu), & \quad & \dots, &\quad \partial_t(\partial_x^{2h} u_{n}) &= \partial_x^{2h}p_n(\bu, \partial_x\bu, \dots, \partial^h_x\bu).
\end{aligned}
\end{equation}

By treating each $\partial_x^i u_j$ as a function of $t$ (e.g., by fixing an arbitrary value of $x$), we can symbolically consider \eqref{eq:proof-quad} as an ODE system with $n(2h + 1)$ state variables $\mathbf{u}, \partial_x \mathbf{u}, \ldots, \partial_x^{2h} \mathbf{u}$ and $nh$ input variables $\partial_x^{2h+1} \mathbf{u}, \ldots, \partial_x^{3h} \mathbf{u}$. Now, let us write the definition of an input-free quadratization from \cite[Definition~3.4]{bychkov2023exact}.

\begin{definition}
    Consider the system
    \begin{equation}
    \label{eq:def3.4}
        \frac{\textnormal{d}\bx}{\textnormal{d}t} = \bp(\bx, \bu)
    \end{equation}
    where $\bx =\bx(t) = \left[x_1, \ldots, x_N\right]^{\top}$ are the states, $\bu =\bu(t) = \left[u_1, \ldots, u_r\right]^{\top}$ denote the inputs and $\bp\in\mathbb{C}^N[\bx,\bu]$. An $\ell$-dimensional vector of new variables 
    $$\bw = \bw(\bx)\in\mathbb{C}^\ell [\bx]$$ is said to be an \textnormal{input-free quadratization} of~\eqref{eq:def3.4} if there exist vectors $\mathbf{q}_1(\mathbf{x}, \mathbf{w}, \mathbf{u})$ and $\mathbf{q}_2(\mathbf{x}, \mathbf{w}, \mathbf{u})$ of dimensions $N$ and $\ell$, respectively, with the entries being polynomials of total degree at most two such that
    $$
\frac{\textnormal{d}\bx}{\textnormal{d}t}=\mathbf{q}_1(\mathbf{x}, \mathbf{w}, \mathbf{u}) \quad \textit{ and } \quad \frac{\textnormal{d}\bw}{\textnormal{d}t}=\mathbf{q}_2(\mathbf{x}, \mathbf{w}, \mathbf{u}).
$$
\end{definition}

Our goal is to prove that~\eqref{eq:proof-quad} has an input-free quadratization. We begin by stating~\cite[Proposition~3.9]{bychkov2023exact}.
\begin{proposition}
\label{prop:prop3.9}
    Let $\bx = \bx(t)=\begin{bmatrix}x_1(t)& \dots &x_N(t)\end{bmatrix}^\top$. Consider an input-affine system of the form 
    \begin{equation}
    \label{eq:prop3.9}
        \frac{\textup{d}{\mathbf{x}}}{\textup{d}t}=\bp_0(\mathbf{x})+\sum_{i=1}^r \bp_i(\mathbf{x}) g_i,
    \end{equation}
where $g_1, \dots, g_r$ are input functions and $\bp_0(\bx)\dots\bp_r(\bx)$ are vectors of polynomials. We introduce $r$ differential operators:
$$
D_i(x_j):=\mathbf{p}_i(\mathbf{x})^{\top} \cdot \frac{\partial x_j}{\partial\mathbf{x}}, \quad 1 \leq i \leq r, \quad 1\leq j \leq N
$$
where $\frac{\partial}{\partial \mathbf{x}}=\left[\frac{\partial}{\partial x_1}, \ldots, \frac{\partial}{\partial x_N}\right]^{\top}$. Let $\mathcal{A}$ be a subalgebra generated by $D_1, \ldots, D_r$ in the algebra $\mathbb{C}\left[\mathbf{x}, \frac{\partial}{\partial \mathbf{x}}\right]$ of all polynomial differential operators in $\mathbf{x}$. Then there is an input-free quadratization for~\eqref{eq:prop3.9} if and only if
\begin{equation}
\label{eq:cond-proof}
    \operatorname{dim}\left\{A\left(x_j\right) \mid A \in \mathcal{A}\right\}<\infty \quad \text { for every } 1 \leq j \leq N.
\end{equation}
\end{proposition}
Therefore, to prove the existence of an input-free quadratization, we show that the system~\eqref{eq:proof-quad} satisfies the assumptions of Proposition~\ref{prop:prop3.9}. First, we prove that~\eqref{eq:proof-quad} is input-affine by contradiction. Let us assume that~\eqref{eq:proof-quad} is not input-affine. This means that at least one of the right-hand side expressions in~\eqref{eq:proof-quad} has a monomial with $m$ spatial derivatives of order greater than $2h$, where $m\geq2$. Initially, these derivatives had an order of at most $h$, so each must have been differentiated at least $2h+1-h = h+1$ times. Each differentiation increases only one of the derivative orders in a monomial by one; therefore, the right-hand side that is not input-affine was differentiated at least $m(h+1)=mh+m$ times. However, to obtain the system~\eqref{eq:proof-quad} only $2 h$ differentiations were taken, and since $2h< mh + m$, we arrive at a contradiction. Then, we conclude that~\eqref{eq:proof-quad} must be input-affine. 

To prove the condition in~\eqref{eq:cond-proof} for~\eqref{eq:proof-quad}, we rely on \cite[Lemma~3.13]{bychkov2023exact}. This lemma states that an ODE with state variables $\upsilon_1,\dots, \upsilon_N$ meets the condition~\eqref{eq:cond-proof} if its corresponding differential operators satisfy 
\begin{equation}
\label{eq:diff-op-cond}
    D_i(\upsilon_j)\in \mathbb{C}[\upsilon_1, \dots, \upsilon_{j-1}]
\end{equation} 
for every $1\leq i\leq r$ and $1\leq j\leq N$.  For system~\eqref{eq:proof-quad}, we take $N =n(2h + 1)$ and identify $\upsilon_1, \dots, \upsilon_{N}$ with the symbolic state variables $\mathbf{u}, \partial_x \mathbf{u}, \ldots, \partial_x^{2h} \mathbf{u}$ ordered, so that lower-order derivatives precede higher-order ones.

Now, consider an equation in~\eqref{eq:proof-quad} with a left-hand side of the form $\partial_t(\partial_x^au_j)$ and the input variable $\partial_x^bu_k$ in its right-hand side, with $1\leq j,k\leq n$, $a\leq2h$ and $2h+1\leq b \leq 3h$. In this equation, the monomial with the partial derivative $\partial_x^bu_k$ was differentiated at least $b-h$ times. Since the system is input-affine, it follows that any other derivative in this monomial has order of at most $h+(a-(b-h))$. To have differential operators that satisfy~\eqref{eq:diff-op-cond}, the inequality $h+(a-(b-h))<a$ must hold for every $a$ and $b$, which is equivalent to $2h<b$. Since $2h+1\leq b$ is always true in system~\eqref{eq:proof-quad}, the condition in~\eqref{eq:cond-proof} is met. With this, we deduce that the ODE system~\eqref{eq:proof-quad} has a monomial input-free quadratization. Note that the input-free monomial quadratization of~\eqref{eq:proof-quad} is also a quadratization of the original PDE~\eqref{eq:def-pde} with a \textnormal{\difforder} of $3h$.
\end{proof}
In \cite{hemery2020complexity}, the authors showed that finding optimal polynomial quadratizations of ODEs is NP-hard. Since PDEs are more general mathematical structures than ODEs, we present a reduction-based proof to show that the same problem for PDEs is of the same complexity.
\begin{proposition}
    Finding an optimal monomial quadratization for a nonquadratic polynomial system of PDEs of the form \eqref{eq:def-pde} is NP-hard.  
\end{proposition}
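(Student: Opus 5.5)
We plan a polynomial-time reduction from the ODE problem, which is known to be NP-hard. The source is the problem of finding an optimal monomial quadratization of a polynomial ODE system; \cite{hemery2020complexity} proves this NP-hard, and \cite{quadODE} treats it in the monomial setting. Given an ODE system $\dot{u}_i = p_i(u_1,\dots,u_n)$ with $p_i\in\mathbb{R}[\bu]$, we view it verbatim as a PDE system of the form~\eqref{eq:def-pde} with $h=0$: $\partial_t u_i = p_i(\bu)$, whose right-hand sides contain no spatial derivatives. This translation is clearly polynomial in the input size. It remains to show that optimal monomial quadratizations of the two systems coincide, or at least have the same order. Then any algorithm for the PDE problem solves the ODE problem.

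\textbf{ODE to PDE.} A monomial quadratization $\cW=\{w_1,\dots,w_\ell\}$ of the ODE, with each $w_j\in\mathbb{R}[\bu]$, is also a quadratization of the PDE in the sense of Definition~\ref{def:pde-quad}. Here $c_j=0$ and we may take differential order $k=0$. The time derivative of $w_j$ along the PDE is computed exactly as along the ODE, since only $\partial_t u_i=p_i(\bu)$ enters. Hence the same quadratic polynomials $g$ work.

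\textbf{PDE to ODE.} This is the main step. Take an optimal monomial quadratization $\cW$ of the PDE of some differential order $k$, whose monomials may involve $\partial_x^{a}u_i$. We project it to the ODE by the substitution $\sigma$ that sets all $\partial_x^{a}u_i$ with $a\geq1$ to zero. Concretely, $\sigma$ maps a monomial either to itself, if it is free of derivatives, or to $0$. Consider the identities $\partial_t w_j = g_{n+j}$ and $\partial_t u_i=g_i$, viewed as polynomial identities in the jet variables. Here $\partial_t(\partial_x^a u_i)=\partial_x^a p_i(\bu)$, and every monomial of $\partial_x^a p_i$ with $a\ge1$ contains a derivative factor, so it vanishes under $\sigma$. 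Moreover, $\sigma$ commutes with the derivation $\partial_t$ restricted this way: it is a ring homomorphism, and $\sigma(\partial_t m)=\partial_t\sigma(m)$ for each monomial $m$. This holds because if $m$ contains a derivative factor, every term of $\partial_t m$ still contains one.

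Applying $\sigma$ therefore produces quadratic expressions in $\bu$ and the $\sigma(w_j)$. We must also handle the derivative variables $\partial_x^{r}w_j$ that appear in $g$. For a monomial $w_j$, the image $\sigma(\partial_x^{r}w_j)$ is a polynomial in $\bu$ alone, and it is nonzero only when $r=0$ and $w_j$ is derivative-free. So the surviving nonzero $\sigma(w_j)$ form a set of at most $\ell$ monomials in $\bu$ that quadratizes the ODE. Consequently the two optimal orders coincide.

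The expected obstacle is justifying the commutation of $\sigma$ with $\partial_t$ and its interaction with the $\partial_x^{\le k-c_j}w_j$ terms. This should follow from the fact that these are formal polynomial identities in independent jet variables. Such an identity is valid whenever it holds for all $u\in C^k(\Omega)$, because jets of smooth functions at a point are arbitrary. We would state this genericity fact explicitly before applying $\sigma$.
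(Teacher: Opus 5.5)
Your proposal is correct and uses the same reduction as the paper: embed the ODE verbatim as the $h=0$ case of \eqref{eq:def-pde} and invoke the NP-hardness result of \cite{hemery2020complexity}. The paper simply asserts that the two problems coincide, whereas your projection $\sigma$ (setting every $\partial_x^a u_i$ with $a\ge 1$ to zero) proves what that reduction needs: allowing spatial-derivative monomials and a positive differential order $k$ cannot lower the optimal order below the ODE optimum, a step the paper leaves implicit.
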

\begin{proof}
    To prove by reduction, let us reformulate the ODE quadratization problem such that it becomes an instance of the PDE quadratization problem. Consider a polynomial system of ODEs
    \begin{equation}
    \label{eq:thm-ode}
        \dot{u}_1 = p_1(u_1, \dots, u_n), \quad \dots, \quad \dot{u}_n = p_n(u_1, \dots, u_n).
    \end{equation}
    Defining $\bu(t) := \begin{bmatrix}
        u_1(t) & \dots & u_n(t)
    \end{bmatrix}^\top$, we state the equivalence $\dot{\bu}=\partial_t{\bu}$, which shows that the ODE system~\eqref{eq:thm-ode} is equivalent to~\eqref{eq:def-pde} with $h=0$, i.e., with no partial derivatives in the space variable. Then, finding an optimal monomial quadratization for~\eqref{eq:thm-ode} can be reduced to solving the same problem for the PDE 
    \begin{equation*}
        \partial_t{u_1} = p_1(u_1, \dots, u_n), \quad \dots, \quad \partial_t{u_n} = p_n(u_1, \dots, u_n).
    \end{equation*}
    Since finding an optimal monomial quadratization for an ODE is an NP-hard problem \cite{hemery2020complexity}, we conclude that the same problem applied to PDEs is also NP-hard.
\end{proof} 

\section{PDE Quadratization Algorithm}
\label{sec:alg}
Having presented a theoretical foundation of PDE quadratizations, we introduce the first algorithmic solution to find such transformations. We provide a detailed description of the proposed algorithm in Section~\ref{sec:outline}, and discuss its practical aspects in Section~\ref{sec:correct}.

\subsection{Proposed algorithm: \texttt{QuPDE}}
\label{sec:outline}

We present \texttt{QuPDE}, a computational tool based on symbolic computation and optimization techniques that finds quadratizations for PDEs with a rational or polynomial right-hand side. We divide its construction into four algorithmic modules. 
The first module, described in Section~\ref{sec:mondecomp}, symbolically derives monomial decompositions from a given PDE system. In Section~\ref{sec:verif}, we introduce a second module that verifies whether a set of auxiliary variables is a quadratization. The third module, explained in Section~\ref{sec:search}, handles the search strategy to explore possible quadratizations based on the Branch-and-Bound framework \cite{B&B}. Lastly, the fourth module in Section~\ref{sec:ratPDE} presents a method for finding quadratizations for rational PDEs.

\subsubsection{Monomial decomposition of nonquadratic terms and Module 1}
\label{sec:mondecomp}
Given a polynomial symbolic PDE of the form~\eqref{eq:def-pde}, we first identify all possible ways of decomposing a nonquadratic monomial on the PDE's right-hand into a product of two variables, such that they are quadratic in the new set of auxiliary variables. These decompositions, which we denote as $\cM_1, \dots, \cM_m$, define the search space of potential quadratizations. 
    \begin{example}[Monomial decompositions]
    \label{ex:mondecomp}
        There are five possible decompositions that turn the quartic term $u_x^2u^2$ into a quadratic one: $\cM_1 := (1, u_x^2u^2)$, $\cM_2:=(u_x^2u, u)$, $\cM_3:=(u_x, u_xu^2)$, $\cM_4 :=(u_xu, u_xu)$ and $\cM_5 :=(u_x^2, u^2)$. Note that the expressions $1, u,$ and $u_x$ do not require the introduction of auxiliary variables. Then, if we choose decompositions $\cM_1$, $\cM_2$, or $\cM_3$ to rewrite the original monomial, we disregard the expressions $1, u,$ or $u_x$, respectively and define only one auxiliary variable: $u_x^2u^2$ for $\cM_1$, $u_x^2u$ for $\cM_2$, and $u_xu^2$ for $\cM_3$. Choosing $\cM_4$ would also require the definition of one auxiliary variable $u_xu$. If we choose decomposition $\cM_5$ instead, we must introduce two auxiliary variables: $u_x^2$ and $u^2$.
    \end{example}
Example~\ref{ex:mondecomp} shows that finding the decompositions of a monomial translates into a combinatorial problem, where the goal is to answer the question: In how many ways can we divide a set of elements of size $n$ into two subsets? We solve this problem to derive an upper bound on the number of possible monomial decompositions. Let the elements of a monomial be its polynomial variables repeated by their exponents. For example, in $u_x^2u^2$ the elements are $u_x, u_x, u, u$, where $u_x, u_x$, and $u, u$ are indistinguishable. We call $d$ the total number of such elements, that is, the total degree of the monomial; in the example $u_x^2u^2$, $d=4$. Now consider a monomial of degree $d$. For each element of this monomial, we decide whether to place it in the first of the two subsets into which we are dividing the original expression. This gives us $2^d$ possible options. Note that this result treats equivalent decompositions as distinct; for example, $(u_x^2,u^2)$ and $(u^2, u_x^2)$ are identified as different decompositions. Since we are only interested in combinations rather than ordered pairs, we divide the result $2^d$ by two: 
    \begin{equation}
    \label{eq:mondecompres}
        \frac{2^d}{2} = 2^{d-1}.
    \end{equation}
We assume that all elements within the monomial are distinguishable in our analysis. Consequently, \eqref{eq:mondecompres} provides an upper bound on the decompositions of a nonquadratic monomial. The conservativeness of this upper bound is shown in Example~\ref{ex:mondecomp}, where there are only five possible decompositions of $u_x^2u^2$, yet applying this bound gives $2^{d-1} = 2^3 = 8$ different tuples. 
    
To find decompositions $\cM_1, \dots, \cM_m$ from a PDE of the form~\eqref{def:pde-quad}, Module~\ref{alg:mon-decomp} first stores as tuples all possible ways to rewrite the lowest-degree nonquadratic monomial encountered on the right-hand side of the PDE as the product of two elements. We adopt this strategy motivated by the result in~\eqref{eq:mondecompres}, which implies that lower-degree monomials result in less decompositions. Next, Module~\ref{alg:mon-decomp} generates a set $\cD$ containing these decompositions and sorts them according to predefined heuristics designed to find a lower-order quadratization faster. We implement three heuristics that prioritize auxiliary variables with a higher chance of deriving lower-degree monomials after differentiation---namely, those with lower polynomial degree or lower order of derivatives. These are listed below.
\begin{itemize}
    \item[H1:] Sort by order and degree. Differentiating monomials with higher-order derivatives can lead to polynomials with many nonquadratic terms. To mitigate this, we sort the monomial decompositions by prioritizing expressions of lower order of derivatives and then of lower degree.
    \item[H2:] Sort by degree and order. The result in \eqref{eq:mondecompres} implies that decomposing a monomial with a high degree leads to a high number of decompositions, which translates into more candidates for quadratizations to explore. To reduce the number of decompositions, this rule sorts monomial decompositions by lower degree and secondly by lower order.
    \item[H3:] Sort by a function of order and degree. This rule sorts the monomial decompositions in descending order using a linear combination of the degree and order of the derivatives. Let $j$ be the highest order of derivatives and $d$ the total degree of a monomial; the default in \texttt{QuPDE} for this heuristic is $d + 2j$. 
\end{itemize}
In Module~\ref{alg:mon-decomp}, we describe the steps of this module through a pseudocode, while Example~\ref{ex:sort-heur} illustrates its application to a PDE. 

\begin{module}[H]
    \caption{(Monomial decomposition of nonquadratic terms)}
    \label{alg:mon-decomp}
        \begin{algorithmic}[1]
            \Require (i) Symbolically defined right-hand side of a PDE of the form~\eqref{eq:def-pde}, (ii) Sorting heuristic (H1, H2, or H3)
            \Ensure Set $\cD$ with tuples of size at most two sorted by the heuristic
            \State $p \gets$ lowest-degree nonquadratic monomial encountered on the PDE's right-hand side
            \State $\cM_1, \dots, \cM_m \gets$ monomial decompositions of $p$
            \State Remove from $\cM_1, \dots, \cM_m$ the monomials that do not propose auxiliary variables
            \State $\cD \gets$ $\{\cM_1, \dots, \cM_m\}$ 
            \State Sort $\cD$ according to heuristic \\
            \Return $\cD$
        \end{algorithmic}
    \end{module}

    \begin{example}[Application of Module~\ref{alg:mon-decomp}]
    \label{ex:sort-heur}
        Consider the one-dimensional quartic PDE
        \begin{equation}
            \label{eq:vardecomp}
                u_t = u^3u_x.
        \end{equation}
        Following step 1 of Module~\ref{alg:mon-decomp}, we identify $u^3u_x$ as the only nonquadratic monomial in the right-hand side of \eqref{eq:vardecomp}. In step 2, we compute the monomial decompositions of $u^3u_x$, from which we obtain the tuples $\cM_1:= (u, u^2u_x)$, $\cM_2:=(u^2, u_xu)$, $\cM_3 := (u^3, u_x)$ and $\cM_4:= (u_x^3u, 1).$ With each of these tuples, one can rewrite the original expression as quadratic.

        As stated in step 3, we dispose of monomials that do not propose auxiliary variables in each $\cM_i$---in this example, $u$, $u_x$, and 1---to arrive at the candidate auxiliary variables for quadratization listed in the set $\cD:=\{(u^2u_x), (u^2, u_xu), (u^3), (u_x^3u)\}$. Then, we follow step 4 and sort $\cD$ by a determined criterion. For illustration purposes, we show the result of sorting according to each of the heuristics defined above: using H1 we get the sorted $\cD =\{(u^3), (u^2, u_xu), (u^2u_x), (u_x^3u)\}$, using H2 we get $\cD =\{(u^2, u_xu), (u^3), (u^2u_x), (u_x^3u) \}$ and using H3 we get $\cD =\{(u^3), (u^2, u_xu), (u^2u_x), (u_x^3u)\}$.
    \end{example}
Having established how to obtain the auxiliary variables that lead to a quadratic form for a particular monomial, we next present the algorithmic module that determines whether a set of these variables is a quadratization.

\subsubsection{Verification of a quadratization and Module 2}
\label{sec:verif}
We now algorithmically verify whether a set of monomials (e.g., one produced by the application of Module~\ref{alg:mon-decomp}) forms a quadratization for a polynomial PDE. We follow the ideas in \cite{quadODE} and define three sets for this purpose. The first set, $\cV$, contains the polynomials of a potential quadratization, including the spatial derivatives of the auxiliary variables that are allowed according to the \text{\difforder} set for the quadratization. The second set, $\cV^2$, includes all polynomials that result from multiplying the elements of the first set with one another. The third set, $\mathcal{P}$, contains the polynomials on the right-hand side of the PDE and the time derivatives of the auxiliary variables.
    \begin{definition}
        (Definition of the sets $\mathcal{V}$, $\mathcal{V}^2$ and $\mathcal{P}$)
        \label{def:v-v2-pnq}
        Consider a vector
        \begin{equation*}
            \bw := \begin{bmatrix}
                w_1 & \dots & w_\ell
            \end{bmatrix}^\top, 
        \end{equation*}
        where $w_1, \dots, w_\ell\in\mathbb{R}[\bu, \partial_x\bu, \dots, \partial^m_x\bu]$ are auxiliary monomials that comprise a potential quadratization $\cW$ of \textit{\difforder} $k$ of a PDE of the form~\eqref{eq:def-pde}. 
        We define the sets $\mathcal{V}$, $\mathcal{V}^2$ and $\mathcal{P}$ as follows:
    \begin{itemize}
        \item $\mathcal{V}$ denotes the set of polynomials to be used for rewriting the PDE as quadratic. This set contains: the original variables of the PDE and their spatial derivatives up to order $k$, the auxiliary variables and their allowed spatial derivatives given by the set $\cC = \{ c_1, \dots, c_\ell \}$ and $k$, and the constant 1, i.e.,
        \begin{equation}
        \label{eq:V-set}
            \mathcal{V} \coloneq \left\{1, \bu, \partial_x\bu, \dots, \partial^{k}_x\bu, \partial_x^{\leq k-\cC}\bw \right\}.
        \end{equation} 
        \item $\mathcal{V}^2$ is the set of all polynomials that are generated by multiplying two elements of $\mathcal{V}$, i.e.,
        \begin{equation}
            \label{eq:V2-set}
            \mathcal{V}^2 \coloneq \{v_1 v_2 \hspace{1mm} | \hspace{1mm} v_1, v_2 \in \mathcal{V}\}.
        \end{equation}
        \item $\mathcal{P}$ is the set of the polynomials on the right-hand side of both the $n$ equations of the original PDE and the time derivatives corresponding to the $\ell$ auxiliary variables, i.e.,
        \begin{equation*}
            \mathcal{P}: =\{p_1(\bu, \partial_x\bu, \dots, \partial^{k}_x\bu), \dots, p_{n+\ell}(\bu, \partial_x\bu, \dots, \partial^{k}_x\bu)\}.       
            \end{equation*}
    \end{itemize}
    \end{definition}
Based on the definitions of $\mathcal{V}$, $\mathcal{V}^2$, and $\mathcal{P}$, it is sufficient to verify that the polynomials in $\mathcal{P}$ are linear combinations of elements in $\mathcal{V}^2$ to conclude that the corresponding set of auxiliary variables is a quadratization of the given PDE. This result is stated in the following proposition.

    \begin{proposition}
    \label{prop:verif}
        Consider the set $\cV^2$, and let $\cW$ be the set of auxiliary variables associated with $\mathcal{V}^2$. Then, $\cW$ is a quadratization of differential order $k$ if and only if all $p(\bu, \partial_x\bu, \dots, \partial^{k}_x\bu)\in\mathcal{P}$ belong to the linear span of $\cV^2$.
    \end{proposition}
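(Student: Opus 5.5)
Both directions come from unfolding Definition~\ref{def:pde-quad}. The bridge is one observation: a polynomial of total degree at most two in the symbols of $\cV$ is exactly a linear combination of elements of $\cV^2$. Here $\cV$ is read as the tuple of symbols $\{1,\bu,\partial_x\bu,\dots,\partial_x^k\bu,\partial_x^{\leq k-\cC}\bw\}$. Since $1\in\cV$, the products $1\cdot 1$, $1\cdot v$ and $v_1v_2$ cover the constant, linear and quadratic monomials respectively. I would state this first as a small lemma. After it, the proposition is a translation between the symbolic setting, where $g_i$ is a polynomial in new variables, and the substituted setting, where the $w_i$ are replaced by their expressions in $\bu$ and its derivatives.

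($\Leftarrow$) Suppose every $p\in\cP$ lies in $\operatorname{span}\cV^2$. Then for each $i$ there are coefficients $\lambda^{(i)}_{v_1v_2}\in\mathbb{R}$ with $p_i=\sum\lambda^{(i)}_{v_1v_2}v_1v_2$ as polynomials in $\bu,\dots,\partial_x^k\bu$. Define $g_i$ by the same sum, with each element of $\cV$ replaced by its symbolic counterpart ($w_j$ or $\partial_x^s w_j$ a fresh variable). Each $g_i$ has total degree at most two. For the first $n$ equations, $\partial_t u_i=p_i$ holds by~\eqref{eq:def-pde}. For the auxiliary equations, $p_{n+j}$ is by definition $\partial_t w_j$ computed via the chain rule, $\partial_t w_j=\sum_{s,r}\frac{\partial w_j}{\partial(\partial_x^s u_r)}\partial_x^s p_r$. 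This uses Schwarz's theorem to swap $\partial_t$ and $\partial_x^s$, as in the proof of Theorem~\ref{thm:existence}, and the assumed regularity $u\in C^k(\Omega)$ so that all derivatives up to order $k$ exist. Substituting the lifting then gives $g_{n+j}=\partial_t w_j$ identically, so~\eqref{eq:def-quad} holds and $\cW$ is a quadratization of differential order $k$.

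($\Rightarrow$) Suppose $g_1,\dots,g_{n+\ell}$ of degree at most two satisfy~\eqref{eq:def-quad} for every $\bu\in C^k(\Omega)$. By the lemma, each $g_i$ is a linear combination of products of pairs of symbols from $\cV$. Substituting $\bw$ in terms of $\bu$ turns it into a linear combination of elements of $\cV^2$, now viewed as polynomials. It remains to see that $p_i$ equals this substituted $g_i$ as a polynomial, not just pointwise along solutions. This is the step needing care. The identity $p_i(\bu,\dots)=g_i(\bu,\dots,\bw(\bu),\dots)$ holds for every $\bu\in C^k(\Omega)$ at every $x$. Given any point $(a_0,\dots,a_k)\in\mathbb{R}^{n(k+1)}$, the polynomial $\bu(x)=\sum_s a_s (x-x_0)^s/s!$ attains these jet values at $x_0$. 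So the two polynomials agree on all of $\mathbb{R}^{n(k+1)}$ and are therefore equal. Hence every $p\in\cP$ lies in $\operatorname{span}\cV^2$.

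The main obstacle is this last identification: passing from the functional identity over $C^k$ functions to a polynomial identity in the jet variables. It requires that arbitrary jets are realizable, which polynomial test functions provide, and that a real polynomial vanishing on all of $\mathbb{R}^N$ is the zero polynomial. A further bookkeeping point is making sure $\cP$'s entries for $w_j$ involve derivatives only up to order $k$, which holds because $c_j+h\le k$ is implicit in the chosen differential order.
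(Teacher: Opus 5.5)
Your proof is correct and follows the same route as the paper: unfold Definition~\ref{def:pde-quad} and identify linear combinations of elements of $\cV^2$ with polynomials of total degree at most two in the symbols of $\cV$. The paper only argues the direction from the span condition to a quadratization and simply asserts the converse, so your use of $1\in\cV$ and your jet-realizability argument fill in exactly the step the paper leaves implicit (polynomial test functions attain every jet, which turns the identity over $C^k(\Omega)$ into a polynomial identity).
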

    \begin{proof}
         Let us consider the potential quadratization $\cW$ and its corresponding vector $\bw$. Then, 
        \[\mathcal{V} = \left\{1, \bu, \partial_x\bu, \dots, \partial^{k}_x\bu, \partial_x^{\leq k-\cC}\bw \right\}.\]
        Let $p_j(\bu, \partial_x\bu, \dots, \partial^{k}_x\bu)\in\mathcal{P}$, $g_1(\bu, \partial_x\bu, \dots, \partial^{k}_x\bu), \dots, g_m(\bu, \partial_x\bu, \dots, \partial^{k}_x\bu)$ be all the polynomials in $\mathcal{V}^2$, and $\alpha_1, \dots \alpha_m\in\mathbb{R}$. First, we note from the definition of $\mathcal{V}$ and $\mathcal{V}^2$ that $g_1, \dots, g_m$ are polynomials resulting from the multiplication of at most two elements, where each element is either a variable in $\bu$, a spatial derivative of a component of $\bu$, an auxiliary variable, or a spatial derivative of an auxiliary variable. Then, if $p_j = \sum_{i=1}^n \alpha_i g_i$, it follows that $p_j$ can be written as a polynomial of degree at most two in the variables $\bu, \partial_x\bu, \dots, \partial^{k}_x\bu, \partial_x^{\leq k-\cC}\bw$. Note that $\mathcal{P}$ is the set of all polynomials of the right-hand side of the PDE system, including the time derivatives of the auxiliary variables. Then, if the condition above is satisfied for all $p_j\in\cP$ with $j\in\{1,\dots,n+\ell\}$, $\cW$ is a quadratization of the PDE. Conversely, if $\cW$ is a quadratization of the PDE, then all $p_j\in \cP$ are linear combinations of the elements in $\mathcal{V}^2$.
    \end{proof}
    To algorithmically check the condition in Proposition~\ref{prop:verif}, we define a vector $\bb \in \mathbb{R}^r[\bu, \partial_x\bu, \dots, \partial^{k}_x\bu]$ that contains all the monomials within $\mathcal{V}^2$ and $\mathcal{P}$. Namely, 
    \begin{equation*}
    \label{eq:vector-base}
    \bb \coloneq \begin{bmatrix}
        m_1 &
        m_2 &
        \dots &
        m_r
    \end{bmatrix}^\top,
\end{equation*}
    where $m_1, m_2, \dots, m_r \in \mathbb{R}[\bu, \partial_x\bu, \dots, \partial^{k}_x\bu]$ are all unique monomials within the polynomials in $\mathcal{V}^2$ and $\mathcal{P}$ ordered lexicographically. Moreover, we represent a polynomial as a vector $\bv \in \mathbb{R}^r$, where each nonzero entry corresponds to the presence of monomial $m_i$ (for $i=1, \dots, r$) in the polynomial ordered according to $\bb$, i.e.,
    \begin{equation}
        \label{eq:vector-pol}
        \mathbf{v} \coloneq (\alpha_i)_i,
    \end{equation}
    where $\alpha_i$ is the coefficient of $m_i$ in the polynomial. Next, we define a matrix $\bV^2\in\mathbb{R}^{p\times r}$ based on the polynomials in $\mathcal{V}^2$ as: 
    \begin{equation}
        \label{eq:pol-matrix}
        \bV^2 \coloneq (\alpha_i)_{ji},
    \end{equation}
where $\alpha_i$ is the coefficient of $m_i$ in the polynomial $g_j$, for $j=1,\dots, p$ with $p$ the size of the set $\mathcal{V}^2$. In this matrix, the rows are the transpose of each polynomial vector. Having this matrix representation allows us to apply Gaussian elimination to first obtain the reduced row-echelon form of $\bV^2$, which we denote by $\overline{\bV^2}$, and then use the rows of $\overline{\bV^2}$ to reduce each polynomial vector in $\mathcal{P}$. If the result of a reduction is not zero for a polynomial in $\mathcal{P}$, we store the nonquadratic reduced polynomial in a set that we denote by $\cR_{\rm{nq}}$. If $\cR_{\rm{nq}}$ remains empty after reducing all polynomials in $\mathcal{P}$, we return the quadratic transformation. Otherwise, we output the set $\cR_{\rm{nq}}$, which contains the nonquadratic polynomials from the reduction step that could not be made quadratic using $\cV^2$. We show the underlying algorithm of \texttt{QuPDE}'s verification module in Module~\ref{alg:verif-quad}.
     \begin{module}[H]
        \caption{(Verification of a quadratization)}
        \label{alg:verif-quad}
            \begin{algorithmic}[1]
                \Require (i) Symbolically defined polynomial PDE of the form~\eqref{eq:def-pde}, (ii) Set $\mathcal{W}$, (iii) Differential order $k$
                \Ensure Right-hand side polynomials of a quadratic PDE system of the form \eqref{eq:def-pde} or $\mathcal{R}_{\rm{nq}}$
                \State $\mathcal{P} \gets $ set of polynomials in the right-hand side of both the PDE and the differential equations of the elements in $\mathcal{W}$
                \State $\cV \gets$ set $\cV$ based on $\mathcal{W}$, $k$, and $\mathcal{P}$ as defined in~\eqref{eq:V-set} 
                \State $\bV^2 \gets$ matrix form of corresponding $\mathcal{V}^2$, defined in~\eqref{eq:V2-set}
                \State $\overline{\bV^2} \gets$ reduced row-echelon form of $\bV^2$
                \State $\cR_{\rm{nq}} \gets \{\}$
                \State $\bp_1, \dots, \bp_{n+\ell} \gets$ polynomials in $\cP$ in vector representation
                \For{$\bp_i \in \cP$}
                \State $\overline{\bp}_i \gets$ reduction of $\bp_i$ with respect to the rows of $\overline{\bV^2}$
                \If{$\overline{\bp}_i\neq0$} 
                \State Append $\overline{\bp}_i$ to $\mathcal{R_{\rm{nq}}}$
                \Else
                \State $g_i(\bu, \partial_x\bu, \dots, \partial^{k}_x\bu, \partial_x^{\leq k-\cC}\bw) \gets$ symbolic and quadratic representation of $\bp_i$  
                \EndIf
                \EndFor
            \If{$\mathcal{R_{\rm{nq}}}$ is empty}
            \Return $\{g_1, \dots, g_{n+\ell}\}$
            \Else{} \Return $\mathcal{R}_{\rm{nq}}$
            \EndIf
            \end{algorithmic}
        \end{module}
    For practical efficiency, the software implementation of Module~\ref{alg:verif-quad} performs all reductions at the polynomial level. The explicit matrix representation presented in this section serves illustrative purposes. In the following example, we show how Module~\ref{alg:verif-quad} verifies if a set of auxiliary variables is a quadratization for a particular PDE. 
    \begin{example}[Verification of a quadratization]
    \label{ex:verif}
        Consider the cubic PDE
        \begin{equation}
        \label{eq:ex-verif}
            u_t=u_xu^2.
        \end{equation}
         We introduce the auxiliary variable $w:=u^2$ and write the differential equation $w_t = 2u_tu=2u_xu^3$. Additionally, we aim to obtain a quadratization of \textit{\difforder} 1, so we calculate the first-order spatial derivative of the auxiliary variable, $w_x=2u_xu$. To apply Module~\ref{alg:verif-quad} to verify whether the set $\cW := \{w\}$ is a quadratization for~\eqref{eq:ex-verif}, we set $\mathcal{P}=\{u_t, w_t\}=\{u_xu^2, 2u_xu^3\}$ and construct $\mathcal{V}$ by listing the polynomials that are part of the potential quadratization for~\eqref{eq:ex-verif}, i.e., $\mathcal{V} := \{1, u, u_x, u^2, 2u_xu\}$. It follows \[\mathcal{V}^2 = \{1, u, u_x, u^2, 2u_xu, u_xu, u^3, 2u_xu^2, u_x^2, u^2u_x, 2u_x^2u, u^4, 2u_xu^3, 4u_x^2u^2\}.\] We set the monomial basis $\bb$ as
        \begin{equation}
            \label{eq:vector-rep-ex}
            \bb=\begin{bmatrix}
                1 & u & u_x & u^2 & u_x^2 & u_xu & u^3 & u_xu^2 & u_x^2u & u^4 & u_xu^3 & u_x^2u^2 
            \end{bmatrix}^\top.
        \end{equation}
        Then, the matrix representation of $\mathcal{V}^2$ is
        \begin{equation*}
        \label{eq:matrix-rep-ex}
            \bV^2 =\begin{bmatrix}
                1 & 0 & 0 & 0 & 0 & 0 & 0 & 0 & 0 & 0 & 0 & 0\\
                0 & 1 & 0 & 0 & 0 & 0 & 0 & 0 & 0 & 0 & 0 & 0\\
                0 & 0 & 1 & 0 & 0 & 0 & 0 & 0 & 0 & 0 & 0 & 0\\
                0 & 0 & 0 & 1 & 0 & 0 & 0 & 0 & 0 & 0 & 0 & 0\\
                0 & 0 & 0 & 0 & 0 & 2 & 0 & 0 & 0 & 0 & 0 & 0\\
                0 & 0 & 0 & 0 & 0 & 1 & 0 & 0 & 0 & 0 & 0 & 0\\
                0 & 0 & 0 & 0 & 0 & 0 & 1 & 0 & 0 & 0 & 0 & 0\\
                0 & 0 & 0 & 0 & 0 & 0 & 0 & 2 & 0 & 0 & 0 & 0\\
                0 & 0 & 0 & 0 & 1 & 0 & 0 & 0 & 0 & 0 & 0 & 0\\
                0 & 0 & 0 & 0 & 0 & 0 & 0 & 1 & 0 & 0 & 0 & 0\\
                0 & 0 & 0 & 0 & 0 & 0 & 0 & 0 & 2 & 0 & 0 & 0\\
                0 & 0 & 0 & 0 & 0 & 0 & 0 & 0 & 0 & 1 & 0 & 0\\
                0 & 0 & 0 & 0 & 0 & 0 & 0 & 0 & 0 & 0 & 2 & 0 \\
                0 & 0 & 0 & 0 & 0 & 0 & 0 & 0 & 0 & 0 & 0 & 4 \\
            \end{bmatrix}.
        \end{equation*}
        Following step 2, we apply Gaussian elimination to $\bV^2$ and obtain:
        \begin{equation*}
            \label{eq:matrix-rep-ex-2}
            \overline{\bV^2} := \begin{bmatrix}
                \bI_{12\times12}\\
                \mathbf{0}_{2\times12}
            \end{bmatrix},
        \end{equation*}
        which yields the set of polynomials: $$\{1, u, u_x, u^2, u_x^2, u_xu, u^3, u_xu^2, u_x^2u, u^4, u_xu^3, u_x^2u^2, 0, 0\}.$$

        Let $\overline{v_1}, \dots, \overline{v_{12}}$ the polynomials represented by the nonzero row vectors in $\overline{\bV^2}$. Then, the reduction of the nonquadratic expressions stored in $\mathcal{P}$ is
        \begin{align*}
            \overline{u_t} &= u_t - \overline{v}_{8} = u_xu^2 - u_xu^2 = 0, \\
            \overline{w_t} &= w_t - 2\cdot \overline{v}_{11} = 2u_xu^3 - 2\cdot u_xu^3 = 0.
        \end{align*}
        Since $\overline{u_t}$ and $\overline{w_t}$ are both equal to zero, $\mathcal{\cR_{\rm{nq}}} = \varnothing$, which confirms that $\{u^2\}$ is a quadratization for the PDE~\eqref{eq:ex-verif}.
    \end{example}
    
In Section~\ref{sec:mondecomp}, we have shown that there are numerous ways to define auxiliary variables to express a term as quadratic. Here, in Section~\ref{sec:verif}, we have introduced a mechanism to identify whether a particular set of auxiliary variables is a quadratization. Next, in Section~\ref{sec:search}, we describe an optimization algorithm that determines the order in which Module~\ref{alg:verif-quad} verifies whether the sets produced by Module~\ref{alg:mon-decomp} are quadratizations, and yields the best one among those found.

\subsubsection{Efficient search strategies and Module 3}
\label{sec:search}
        We implement a search algorithm using a \textit{Branch-and-Bound} (B\&B) \cite{B&B} approach to find a low-order quadratization of a PDE, along with additional strategies to speed up the search. The space of elements traversed by this algorithm, which we call the \textit{search space}, consists of all sets of auxiliary variables introduced by Module~\ref{alg:mon-decomp}. We assign to this search space a tree representation, which we call the \textit{search tree}, as illustrated in Figure~\ref{fig:hierarchy-tree}. Here, the root is an empty set and its children nodes are the monomial decompositions generated by Module~\ref{alg:mon-decomp}; the level of a node in the tree is determined by the number of branching steps between that node and the root; and each node represents a \textit{subproblem}, which has a corresponding PDE that results from adding to the original PDE the auxiliary variables in the current node and its ancestors. 
        \begin{figure}[H]
            \centering
            \includegraphics[scale=0.3]{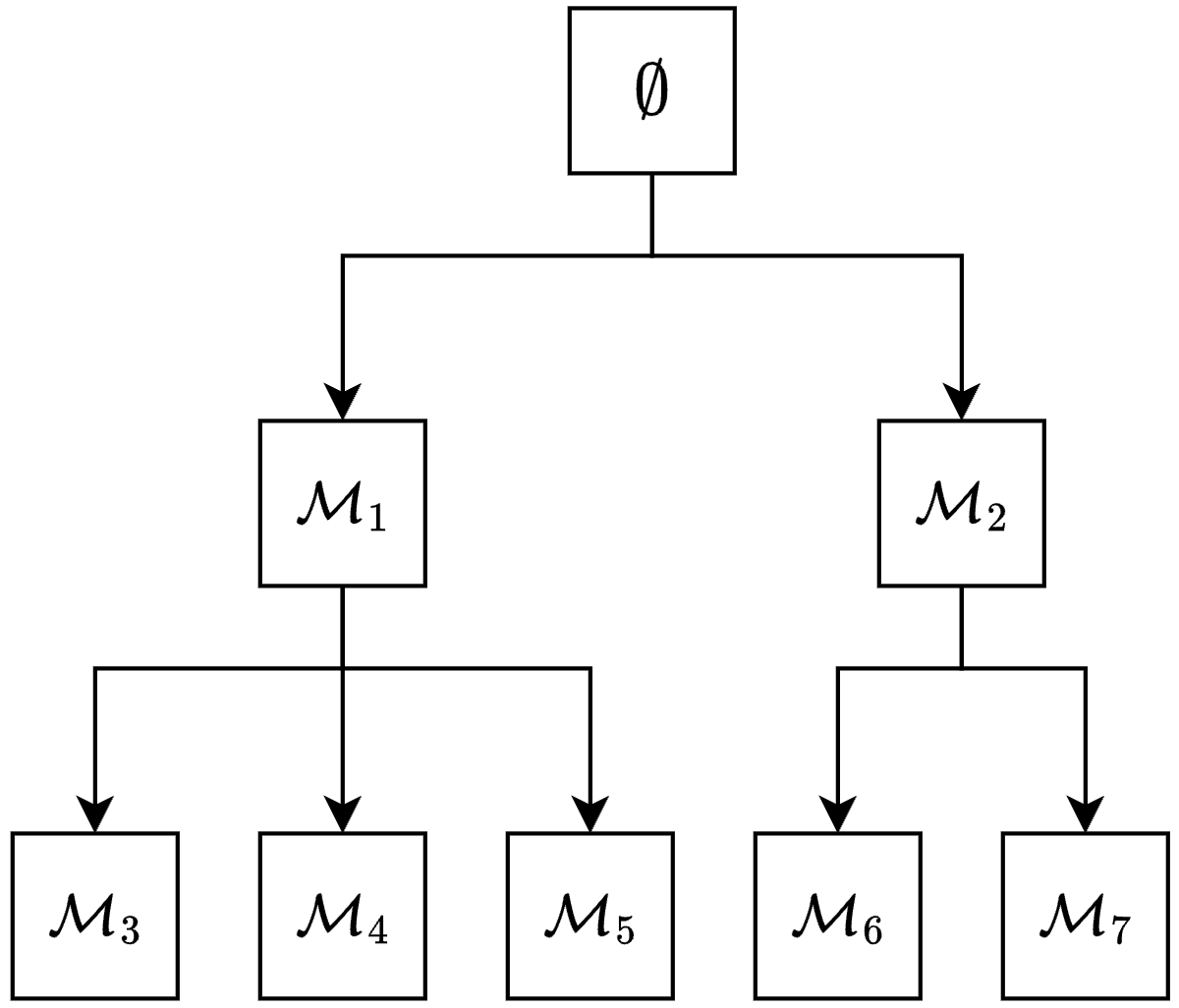}
            \caption{Example of a search tree of monomial decompositions represented by $\cM_1, \dots, \cM_7$.}
            \label{fig:hierarchy-tree}
        \end{figure}
        We formally state how the problem of finding a low-order quadratization through the search tree translates to the B\&B framework using the standard B\&B terminology (e.g., \cite{B&B}). 
        \begin{problem}(Branch-and-Bound formulation for finding a low-order quadratization)
        \label{prob:bnb-form}
                \begin{itemize}
                    \item The search space $\cS$ is the set of all monomials in the variables of a PDE of the form~\eqref{eq:def-pde} and their spatial derivatives up to order $k$. Namely, 
                    \begin{align}
                    \label{eq:set_mon}
                        \cS \subseteq \{m\in \real[\bu, \partial_x\bu, \dots, \partial^{k}_x\bu]\hspace{1mm} | \hspace{1mm} m \text{ is a monomial}\}.
                    \end{align}
                    \item The objective is to minimize the number of monomials introduced in a quadratization. This leads to the problem of finding the quadratization 
                    \begin{align*}
                        \cW^* =  \argmin_{\cW\subseteq\cS}|\cW|,
                    \end{align*}
                    where $\cW$ is a quadratization of \eqref{eq:def-pde} and $|\cdot|$ denotes the cardinality of a set.
                    \item Each subproblem is defined by a finite subset of monomials $\cM\subseteq\cS$, such as those produced by the monomial decompositions of a nonquadratic term within the set $\mathcal{R}_{\rm{nq}}$, introduced in Module~\ref{alg:verif-quad}.
                \end{itemize}
            
        \end{problem}
        The B\&B framework requires three additional components: the search strategy that defines the order in which subproblems in the tree are explored; the branching strategy, which partitions the search space to produce new subproblems in the tree; and the pruning rules that prevent the exploration of regions in the search tree that cannot yield better solutions. Below, we describe the B\&B components of the search and branch strategy for the quadratization problem along with the pruning rules implemented.
            \paragraph{Search strategy} We use a depth-first search with backtracking \cite{TARJANDEPTHFIRST, B&B}. Namely, the algorithm explores the search tree as far as possible along each node until it finds a quadratization, and then it retraces to unexplored subproblems.
             \paragraph{Branching strategy} We apply a wide branching technique, where the children nodes of a subproblem are generated by applying Module~\ref{alg:mon-decomp} to the nonquadratic polynomials of its associated PDE---elements of $\mathcal{R}_{\rm{nq}}$. Note that at the root of the tree, $\mathcal{R}_{\rm{nq}}$ simply contains the nonquadratic terms of the right-hand side of the original PDE. Then, the children nodes represented by potential auxiliary variables are organized according to the sorting heuristic chosen from those introduced in Section~\ref{sec:mondecomp} (H1, H2, H3). This sorting determines the order in which the subproblems are explored.
             \paragraph{Pruning rules} We implement two pruning rules to prevent the exploration of suboptimal regions of the search tree. The first rule is essential for the algorithm, as it directly targets the objective function to minimize---the number of auxiliary variables introduced. The second rule is related to the order of partial derivatives we allow within the quadratic equations and hence to the regularity of the PDE solutions. We detail these rules below:
            \begin{enumerate}
                \item[PR1:] The smallest order---that is, the minimum number of monomials---of a quadratization found within the visited nodes. In the first run of the algorithm, this value is initialized at a predefined bound that the user can change. This bound determines the maximum depth of the tree.

                \item[PR2:] The order of derivatives within the auxiliary variables. This pruning rule enforces user-specified regularity restrictions on the existence and continuity of higher-order spatial derivatives of the solution, and guarantees that the order of derivatives within the transformed quadratic system does not exceed the \textnormal{\difforder} set for the quadratization. Therefore, if a branch introduces monomials with higher-order derivatives than those allowed for the problem, the algorithm prunes that branch. Motivated by the result in Theorem~\ref{thm:existence}, \texttt{QuPDE}'s default \textnormal{\difforder} for a quadratization is three times the highest order of the system ($3h$). Since auxiliary variables are differentiated with respect to space $h$ times to obtain their differential equations, by default, the algorithm prunes branches that introduce variables with spatial derivatives of order higher than $2h$, i.e., $3h-h=2h$. Nonetheless, the user can modify this upper bound.
        \end{enumerate}
     
     Next, we explain the steps of the B\&B search algorithm implemented in Module~\ref{alg:b&b}. The search algorithm starts at the empty root node, as depicted in Figure~\ref{fig:hierarchy-tree}, with a predetermined bound for the largest quadratization it will admit. Then, at each branch, the algorithm evaluates whether it should be pruned based on the defined pruning rules. If Module~\ref{alg:b&b} confirms that a particular branch does not produce a quadratization of lower order than the best one found so far, Module~\ref{alg:verif-quad} checks whether the corresponding set of auxiliary variables in that node is a quadratization. If so, the algorithm returns the quadratization, updates the best solution found so far, and does backtracking. If the auxiliary variables do not yield a quadratization, the search algorithm generates the children of the current node, i.e., its subproblems, and explores them using a depth-first search strategy. After finishing the search throughout the tree, the algorithm returns the smallest set of auxiliary variables, which is a low-order quadratization for the PDE. 
     Module~\ref{alg:b&b} shows the instructions of this algorithm through a pseudocode, and Example~\ref{ex:b&b} illustrates how this module works for a given PDE. 
        
        \begin{module}[H]
        \caption{(Search algorithm using Branch-and-Bound framework)}
        \label{alg:b&b}
            \begin{algorithmic}[1]
                \Require (i) Symbolically defined PDE system of the form~\eqref{eq:def-pde}, (ii) Set of symbolic auxiliary monomials $\cW\subseteq\cS$, (iii) Best quadratization found so far $\widehat{\cW}\subseteq\cS$ of size $\ell$ (initialized at a predetermined bound)
                \Ensure Quadratization $\cW^*\subseteq\cS$
                \If{any of the pruning rules applied to $\cW$ outputs \texttt{True}}
                \Return $\widehat{\cW}$
                \EndIf
                \If{Module~\ref{alg:verif-quad} verifies that $\cW$ is a quadratization for the PDE}
                    \If{$|\cW|<|\widehat\cW|$}
                    \Return $\cW$
                    \Else{} \Return $\widehat{\cW}$
                \EndIf
                \Else
                    \State $\mathcal{R}_{\rm{nq}} \gets $ result of applying Module~\ref{alg:verif-quad} to the PDE with the set $\cW$ 
                    \State $\cD \gets$ monomial decompositions generated by Module~\ref{alg:mon-decomp} of $\mathcal{R}_{\rm{nq}}$
                    \For{$\cM_i\in \cD$}
                        \State $ \cW \gets \cW\cup \cM_i$
                        \State $\overline{\cW}\gets $ call Module~\ref{alg:b&b} with the inputs: 
                        (i) the new PDE system consisting of the original equations plus the differential equations from $\cW$, (ii) $\cW$ as the current set of auxiliary variables, and (iii) $\widehat{\cW}$ as the best solution found so far
                        \If{$|\overline{\cW}|<|\widehat\cW|$} $\widehat\cW \gets \overline{\cW}$
                        \EndIf
                        \State Remove $\cM_i$ from $\cW$\vspace{4px}
                    \EndFor
                    \Return $\cW^*\gets \widehat\cW$
                \EndIf
            \end{algorithmic}
        \end{module} 
    
        \begin{example}[B\&B search algorithm]
        \label{ex:b&b}
            Consider the cubic PDE
                \begin{equation}
                \label{eq:ex-b&b}
                    u_t = u^2u_{xxx}.
                \end{equation}
            We decompose the nonquadratic monomial $u^2u_{xxx}$ into $\cM_1 := (u_{xxx}u^2, 1)$, $\cM_2:=(u_{xxx}u, u)$ or $\cM_3:=(u^2, u_{xxx})$, remove the already existing expressions $1$, $u$, and $u_{xxx}$, and sort the resulting tuples by the heuristic H3 to obtain the set $\cD := \{(u^2), (u_{xxx}u), (u_{xxx}u^2)\}$. Then, based on $\cD$, we draw the search tree as
            \begin{align*}
            \scalebox{0.72}{ 
                \begin{tikzpicture}[
                  level distance=1.3cm,
                  level 1/.style={sibling distance=2.7cm},
                  every node/.style={font=\Large},
                  edge from parent/.style={draw, -, thick},
                  edge from parent path={(\tikzparentnode) -- (\tikzchildnode)},
                  baseline=(root.base)
                ]
                  \node (root) {$\varnothing$}
                    child {node {$(u^2)$}}
                    child {node {$(u_{xxx}u)$}}
                    child {node {$(u_{xxx}u^2)$}};
                \end{tikzpicture}
            }
            \end{align*}
           where each node in parentheses represents the variables we are adding to the system at that tree level. 
           
           Now the algorithm enters the first branch, i.e., the node described by $(u^2)$. The subproblem that this node represents is the PDE system we obtain by introducing the auxiliary variable in the set $\{u^2\}$ to the original PDE. We define $w:=u^2$ and write the new PDE: 
            \begin{align}
                \label{eq:ex-b&b2}
                u_t &= u^2u_{xxx}\\
                \label{eq:ex-b&b3}
                w_{t} &= 2u^3u_{xxx}.
            \end{align}
            After applying Module~\ref{alg:verif-quad} to~\eqref{eq:ex-b&b2} and~\eqref{eq:ex-b&b3}, we obtain that the reduced $w_{t}$ with respect to the set $\cV^2$ is $\overline{w_t} =w_t - w_{xxx}w + \frac{3}{2}w_xw_{xx} =6u_x^3u $. This results in $\mathcal{R}_{\rm{nq}}=\{6u_x^3u\}$, confirming that the set $\{u^2\}$ is not a quadratization. Therefore, we explore the subproblems of this node by executing Module~\ref{alg:mon-decomp} for $\mathcal{R}_{\rm{nq}}$, which in this case computes the decompositions of the nonquadratic monomial $u_x^3u$. After sorting these decompositions by H3, the tree for this search takes the form 
             \begin{align*}
            \hspace{-3.1cm}
                \scalebox{0.72}{ 
                \begin{tikzpicture}[
                  level distance=1.2cm,
                  level 1/.style={sibling distance=2.8cm},
                  level 2/.style={sibling distance=2.5cm},
                  every node/.style={font=\Large},
                  edge from parent/.style={draw, -, thick},
                  edge from parent path={(\tikzparentnode) -- (\tikzchildnode)},
                  baseline=(root.base)
                ]
                \node (root) {$\varnothing$}
                  child {node {$(u^2)$}
                    child {node {$(u_x^2, u_xu)$}}
                    child {node {$(u_x^2u)$}}
                    child {node {$(u_x^3)$}}
                    child {node {$(u_x^3u)$}}
                  }
                  child {node {$(u_{xxx}u)$}}
                  child {node {$(u_{xxx}u^2)$}
                  };
                \end{tikzpicture}
                }
            \end{align*}
            
            When the algorithm visits the node represented by $(u_x^2, u_xu)$, it executes Module~\ref{alg:verif-quad} on $\{u^2, u_x^2, u_xu\}$ and confirms that this set is a quadratization. This is depicted with the green box in the tree: 
            \begin{equation}
            \resizebox{1.0\textwidth}{!}{
                \label{eq:fig3-b&bex}
                \begin{forest} 
                for tree={
                    s sep=-1.7mm,
                    l sep=10mm,
                    font=\large,                    
                }
                    [{\Large$\varnothing$}, 
                    [$(u^2)$, 
                    for level=1{   
                      l sep=20mm
                    }, edge label={node[midway, fill=white, inner sep=1pt, text=blue, ]{1}}
                    [{$(u_x^2, u_xu)$}, tikz={\node [draw,green!45!black,inner sep=0,fit to=tree]{};}, edge label={node[midway, fill=white, inner sep=1pt, text=blue,  ]{2}}] 
                    [{$(u_x^2u)$}, edge label={node[midway, fill=white, inner sep=1pt, text=blue, ]{3}}] 
                    [{$(u_x^3)$}, edge label={node[midway, fill=white, inner sep=1pt, text=blue, ]{4}}]
                    [{$(u_x^3u)$}, edge label={node[midway, fill=white, inner sep=1pt, text=blue, ]{5}}]
                    ] 
                    [{$(u_{xxx}u)$},
                    for level=1{  
                      l sep=20mm
                    }, edge label={node[midway, fill=white, inner sep=1pt, text=blue, ]{6}}
                    [{$(u_x^2u)$}, edge label={node[midway, fill=white, inner sep=1pt, text=blue,]{7}}]
                    [{$( u_x^2, u_{xxxx}u )$}, text opacity=0.5, edge={opacity=0.5}]
                    [{$( u_xu_{xxxx}, u_xu)$}, text opacity=0.5, edge={opacity=0.5}]
                    [{$(u_x^2u_{xxxx})$}, edge label={node[midway, fill=white, inner sep=1pt, text=blue, ]{8}}]
                    [{$(u_xu_{xxxx}u)$}, edge label={node[midway, fill=white, inner sep=1pt, text=blue, ]{9}}]
                    [{$(u_x^2u_{xxxx}u)$}, edge label={node[midway, fill=white, inner sep=1pt, text=blue, ]{10}}]]
                    [{$(u^2u_{xxx})$},
                    for level=1{  
                      l sep=20mm
                    }, edge label={node[midway, fill=white, inner sep=1pt, text=blue, ]{11}}
                    [{$(u_{xxx}u)$}, edge label={node[midway, fill=white, inner sep=1pt, text=blue, ]{12}}]
                    [{$(u_{xxx}^2, u^3)$}, text opacity=0.5, edge={opacity=0.5}]
                    [{$(u_{xxx}^2u, u^2)$}, text opacity=0.5, edge={opacity=0.5}]
                    [{$(u^3u_{xxx})$}, edge label={node[midway, fill=white, inner sep=1pt, text=blue, ]{13}}]
                    [{$(u^2u_{xxx}^2)$}, edge label={node[midway, fill=white, inner sep=1pt, text=blue, ]{14}}]
                    [{$(u^3u_{xxx}^2)$}, edge label={node[midway, fill=white, inner sep=1pt, text=blue, font=]{15}}]]
                    ]]
                \end{forest}}
                \end{equation}
            where the order of traversed branches is labeled with blue numbers. After finding the quadratization on the node in the green box, it backtracks to the parent node $(u^2)$ and starts exploring nodes with possible lower-order quadratizations. From the third to the fifteenth step in~\eqref{eq:fig3-b&bex},
            the algorithm verifies that the nodes given by those branches are not quadratizations. Moreover, since the order of the best quadratization found so far is three, it prunes the four branches depicted in gray. 
            After verifying in step fifteen that the last lower-order quadratization candidate, the set $\{u^2u_{xxx}, u^3u_{xxx}^2\}$, is not a quadratization for~\eqref{eq:ex-b&b}, Module~\ref{alg:b&b} yields $\{u^2, u_x^2, u_xu\}$ as the best solution.
        \end{example}
    We can further reduce the number of nodes traversed by Module~\ref{alg:b&b} and increase its chances of finding a smaller quadratization by introducing additional steps to the search strategy. This improvement is demonstrated in Example~\ref{ex:shrink-ex} below, where finding a lower-order quadratization for PDE~\eqref{eq:ex-b&b} requires visiting only four nodes instead of fifteen.

    \subsubsection*{Improvements to the search strategy}
    
    We introduce an additional step to find a bound to prune more branches within the search tree and have a better chance of finding lower-order quadratizations. Given a polynomial PDE, if a set of $\ell$ auxiliary variables yields a quadratization, it happens in practice that a smaller subset could also be a quadratization for the PDE. In our implementation, we added an extra step where each time Module~\ref{alg:b&b} finds a quadratization, it tries to find a better one within all the subsets of the quadratization found. If Module~\ref{alg:b&b} finds a smaller quadratization through this extra step, it uses this new bound for pruning future branches. We illustrate the effect of this modification by revisiting Example~\ref{ex:b&b}.
            \begin{example}[Modification to the B\&B search algorithm]
            \label{ex:shrink-ex}
                We consider the PDE~\eqref{eq:ex-b&b}. When Module~\ref{alg:b&b} finds the first quadratization of this PDE in its second step depicted in~\eqref{eq:fig3-b&bex}, the state of the search tree is:
                \begin{align*}
                    \hspace{-3.1cm}
                    \scalebox{0.73}{ 
                    \begin{tikzpicture}[
                      level distance=1.2cm,
                      level 1/.style={sibling distance=2.8cm},
                      level 2/.style={sibling distance=2.5cm},
                      every node/.style={font=\Large},
                      edge from parent/.style={draw, -, thick},
                      edge from parent path={(\tikzparentnode) -- (\tikzchildnode)}
                    ]           
                    \node (root) {$\varnothing$}
                      child {node (n1) {$(u^2)$}
                        child {node (a1) {$(u_x^2, u_x u)$}
                        edge from parent node[midway, blue, font=\small,fill=white,inner sep=1pt] {2}
                        }
                        child {node (a2) {$(u_x^2 u)$}}
                        child {node (a3) {$(u_x^3)$}}
                        child {node (a4) {$(u_x^3 u)$}}
                        edge from parent node[midway,blue, font=\small,fill=white,inner sep=1pt] {1}
                      }
                      child {node (n2) {$(u_{xxx}u)$}}
                      child {node (n3) {$(u_{xxx}u^2)$}};
                    \node[draw=green!50!black, thick, fit=(a1), inner sep=0.5pt] {};
                    \end{tikzpicture}
                    }
                \end{align*}
                Instead of searching for smaller quadratizations in other branches, Module~\ref{alg:b&b} calls Module~\ref{alg:verif-quad} for each subset of the quadratization found: $\{u^2,u_x^2, u_xu\}$. This is depicted below
                
                \begin{align}
                    \label{eq:fig2-shrinkex}
                    \hspace{-5cm}
                    \vcenter{\hbox{
                    \scalebox{0.73}{ 
                    \begin{tikzpicture}[
                      level distance=1.2cm,
                      level 1/.style={sibling distance=2.8cm},
                      level 2/.style={sibling distance=2.5cm},
                      level 3/.style={sibling distance=1.8cm, level distance=2cm},
                      every node/.style={font=\Large},
                      edge from parent/.style={draw, -, thick},
                      edge from parent path={(\tikzparentnode) -- (\tikzchildnode)}
                    ]           
                    \node (root) {$\varnothing$}
                      child {node (n1) {$(u^2)$}
                        child {node (a1) {$( u_x u, u_x^2)$} 
                            child[dotted] {node (c1) {$\{u^2\}$}}
                            child[dotted] {node (c2) {$\{u_x^2\}$}}
                            child[dotted] {node (c3){$\{u_xu\}$}}
                            child[dotted] {node (c4) {$\{u^2, u_x^2\}$}}
                            child[dotted] {node {$\dots$}}
                            edge from parent node[midway, blue, font=\small,fill=white,inner sep=1pt] {2}
                            }
                        child[opacity=0.4] {node (a2) {$(u_x^2 u)$}}
                        child[opacity=0.4] {node (a3) {$(u_x^3)$}}
                        child[opacity=0.4] {node (a4) {$(u_x^3 u)$}}
                        edge from parent node[midway, blue, font=\small,fill=white,inner sep=1pt] {1}
                      }
                      child {node (n2) {$(u_{xxx}u)$}
                      edge from parent node[midway, blue, font=\small,fill=white,inner sep=1pt] {3}
                      }
                      child {node (n3) {$(u_{xxx}u^2)$}
                      edge from parent node[midway, blue, font=\small,fill=white,inner sep=1pt] {4}
                      };
                    \node[draw=green!50!black, thick, fit=(a1), inner sep=0.5pt] {};
                    \node[draw=green!50!black, thick,  fit=(c4), inner sep=0.5pt] {};
                    \end{tikzpicture}
                    }
                    }}
                \end{align}
                where the dotted line arrows distinguish these subsets from the formal search tree nodes. As shown in~\eqref{eq:fig2-shrinkex}, Module~\ref{alg:b&b} searches for a quadratization among the subsets $\{u^2\}, \{u_x^2\}, \{u_xu\}, \{u^2, u_x^2\}$. In the last set of variables, it identifies a quadratization; then, it stops searching within the rest of the node's subsets and updates the minimum quadratization order found. Given the update of the pruning rule, the path that Module~\ref{alg:b&b} follows through the tree is the one illustrated in~\eqref{eq:fig2-shrinkex} in blue numbers, where the resulting best quadratization is $\{u^2, u_x^2\}$.
            \end{example}
            
Having described the complete pipeline for obtaining quadratizations of polynomial PDEs, we introduce a method for handling rational PDEs, building on the algorithms presented in Module~\ref{alg:mon-decomp}, Module~\ref{alg:verif-quad}, and Module~\ref{alg:b&b}.

\subsubsection{Polynomialization of rational PDEs and Module 4}
\label{sec:ratPDE}

Nonlinear PDEs can have rational functions in their drift. Some examples are the Euler equations \cite{EULEREQ} and the HUX (Heliospheric Upwinding eXtrapolation) model that predicts the heliospheric solar wind speed \cite{Riley2011, solarwind_2023}. However, the modules presented so far only handle PDEs of polynomial nature. One way to find a quadratization of a PDE with rational terms using our tools is to transform the right-hand side expressions into polynomials by introducing rational auxiliary variables, a technique commonly referred to as \textit{polynomialization}. For example, consider the PDE $u_t = \frac{1}{u+1}$. We could introduce an auxiliary variable $q=\frac{1}{u+1}$, and the system would have the differential equations: $u_t = q$ and $q_t=-\frac{1}{(u+1)^3}=-q^3$. The problem with this approach lies in the omission of the relation $q=\frac{1}{u+1}$ by the algorithmic modules, which would mean disregarding the constraint $(u+1)\cdot q = 1$ during the quadratization procedure. This problem affects our solution's ability to find low-order quadratizations for the original PDE as it may not recognize that a polynomial is quadratic if we apply the simplification $(u+1)\cdot q = 1$, as it is shown in Example~\ref{ex:rat-module}.
    
Consider $p_1,\dots, p_n\in \real[\bu, \partial_x\bu, \dots, \partial^h_x\bu]$ and $d_1,\dots, d_n\in \real[\bu, \partial_x\bu, \dots, \partial^h_x\bu]$. We present Module~\ref{alg:rat-pdes}, which addresses PDEs of the form 
    \begin{align}
        \label{eq:rat-pde}
        \partial_t u_1=\dfrac{p_1(\bu, \partial_x\bu, \dots, \partial^h_x\bu)}{d_1(\bu, \partial_x\bu, \dots, \partial^h_x\bu)}, \dots, \partial_t u_n=\dfrac{p_n(\bu, \partial_x\bu, \dots, \partial^h_x\bu)}{d_n(\bu, \partial_x\bu, \dots, \partial^h_x\bu)},
    \end{align}
and includes the rational constraints while finding a quadratization. This module finds the partial fraction decomposition of a multivariate rational function by following \cite[Algorithm~2]{MultivariateApart2021}. There, the authors propose a two-step procedure given a multivariate rational fraction: (i) construct a special set of polynomials called a Gr{\"o}bner basis \cite{groebner} using a set of polynomials known as an ideal \cite{Cox2015}, and (ii) reduce the rational function with respect to this basis. To illustrate this approach, we introduce the definitions of a polynomial ideal and a Gr{\"o}bner basis. In polynomial algebra, a set of polynomials $g_1,\dots,g_m\in\mathbb{F}[\upsilon_1, \dots, \upsilon_n]$, called generators, define an ideal set denoted by $\mathcal{I}$. This ideal consists of all possible polynomial combinations of the generators, where the coefficients are the polynomials themselves: $\mathcal{I}:= \langle g_1,\dots,g_m \rangle =\{\sum_i f_i g_i \hspace{1mm}| \hspace{1mm}f_i\in\mathbb{F}[\upsilon_1, \dots, \upsilon_n]\}$, where the notation $\langle \cdot \rangle$ indicates that $\cI$ is generated by $g_1,\dots,g_m$. A Gr{\"o}bner basis is a set of generators of a particular ideal that allows one to find a unique remainder when reducing other polynomials with respect to this basis.

To apply these ideas in Module~\ref{alg:rat-pdes}, we consider $f_1, \dots, f_m$ as the irreducible factors of the denominators of a rational function within~\eqref{eq:rat-pde}, where $f_1, \dots, f_m \in \mathbb{R}[\bu, \partial_x\bu, \dots, \partial^h_x\bu]$. Now, let $q_1,\dots,q_m$ be the formal variables representing the inverses of $f_1, \dots, f_m$, e.g., $q_i=\frac{1}{f_i}$ for $i\in\{1,\dots, m\}$, and consider the ideal 
        \begin{equation}
        \label{eq:ideal}
        \cI = \langle  q_1 f_1 - 1, \dots, q_m f_m - 1 \rangle,
        \end{equation}
where $\cI \subset \mathbb{R}[q_1, \dots, q_m, \bu, \partial_x\bu, \dots, \partial^h_x\bu]$. Setting the generators $q_i f_i - 1=0$ imposes the constraint $q_i = 1 / f_i$. Motivated by this, we construct a Gr{\"o}bner basis corresponding to the generators of the ideal~\eqref{eq:ideal} and use it to reduce polynomials during quadratization steps involving polynomial operations, e.g., Gaussian elimination in Module~\ref{alg:verif-quad}. This ensures that any product of the form $q_if_i$ will be replaced by 1. The detailed steps of this procedure are listed in pseudocode in Module~\ref{alg:rat-pdes}, and Example~\ref{ex:rat-module} illustrates the integration of Module~\ref{alg:rat-pdes} into the quadratization procedure.
    
    \begin{module}[H]
        \caption{(\texttt{QuPDE}'s module for PDEs with rational terms)}
        \label{alg:rat-pdes}
        \begin{algorithmic}[1]
            \Require Symbolically defined rational PDE system of the form~\eqref{eq:rat-pde}
            \Ensure (i) Set of auxiliary variables $\cW$, (ii) A Gr{\"o}bner basis corresponding to the constraints of the rational variables, (iii) The PDE system in polynomial form expressed in terms of the rational auxiliary variables 
            \State $\cW \gets \{\}$
            \For{each right-hand side expression in the PDE system}
            \State Rewrite the expression as $\dfrac{p(\bu, \partial_x\bu, \dots, \partial^h_x\bu)}{d(\bu, \partial_x\bu, \dots, \partial^h_x\bu)}$
            \State Cancel common factors between $p$ and $d$
            \State $\{f_1, \dots, f_m\} \gets$ terms of the factorization $d=\prod_{i=1}^mf_i^{j_i}$, with $j_i>0$
            \For{$f_i\in \{f_1, \dots, f_m\}$}
            \State $q_i \gets \frac{1}{f_i}$
            \If{$q_i \notin \cW $}
            \State $\mathcal{I} \gets \mathcal{I}\cup\{q_if_i - 1\}$
            \State $\cW \gets \cW\cup\{q_i\}$
            \EndIf
            \EndFor            
            \EndFor
            \State $\cG \leftarrow $ Gr{\"o}bner basis of $\mathcal{I}$
            \State Rewrite each right-hand side of the rational PDE system using the variables in $\cW$, add to the system the differential equations corresponding to the auxiliary variables in $\cW$, and reduce each expression with respect to $\cG$ to obtain a polynomial system\\
            \Return $\cW$, $\cG$ and the polynomial PDE system 
            
        \end{algorithmic}
    \end{module}

    \begin{example}[Quadratization for a rational PDE]
    \label{ex:rat-module}
        Consider the PDE
        \begin{align}
            \label{eq:ex-rat-1}
            & u_t = \dfrac{u_{xxx}}{u}.
        \end{align}
    Since \eqref{eq:ex-rat-1} has a rational term, we follow the steps in Module~\ref{alg:rat-pdes} to obtain a polynomial system. Let $p\coloneq u_{xxx}$ and $d\coloneq u$. First, note that the right-hand side of \eqref{eq:ex-rat-1} is already expressed in the factorized form $p/d$, where the only factorization term of $d$ is $f_1:=u$. Then, we immediately follow step 7 of Module~\ref{alg:rat-pdes} and introduce the auxiliary variable $q\coloneq 1/f_1 = 1/u$ to define the ideal set $\cI := \langle qu - 1\rangle$, from which we construct the Gr{\"o}bner basis $\{qu - 1\}$. Following step 12, we rewrite the original PDE as $u_t = u_{xxx}q$, add the corresponding differential equation of $q$ to the polynomial PDE, and reduce it with respect to the Gr{\"o}bner basis. Then, we obtain the following PDE:
        \begin{align}
        \label{eq:ex-rat21}
            u_t &= u_{xxx}q, \\
        \label{eq:ex-rat22}
            q_t &= -q^3u_{xxx}.
        \end{align}
    
    Now, to find a quadratization for the polynomial PDE described by \eqref{eq:ex-rat21} and \eqref{eq:ex-rat22}, we execute Module~\ref{alg:b&b}, reducing each polynomial multiplication with respect to the defined Gr{\"o}bner basis. With this modification, Module~\ref{alg:b&b} finds the quadratization $\{q^3\}$. Defining $w:=q^3$, we obtain the quadratic PDE system:
    \begin{align*}
        w_t &= w_{xxx}q + 2q_{x}w_{xx} - 14q_{xx}w_{x}, \\
        q_t &= -u_{xxx}w, \\
        u_t &= u_{xxx}q.
    \end{align*}
    Note that if we apply Module~\ref{alg:b&b} to the polynomialized system~\eqref{eq:ex-rat21} and~\eqref{eq:ex-rat22} without executing Module~\ref{alg:rat-pdes}, the best quadratization we would obtain would be $\{q^3, q^5\}$, as the algorithm would not register the definition of $q$. 
    \end{example}

\subsubsection{\texttt{QuPDE} overview}
\label{sec:QuPDEover}

To summarize the construction of \texttt{QuPDE}, Figure~\ref{fig:qupde-diag} presents a flowchart depicting \texttt{QuPDE}'s routine when a nonquadratic PDE is given as input, showing how Module~\ref{alg:mon-decomp}, Module~\ref{alg:verif-quad}, Module~\ref{alg:b&b}, and Module~\ref{alg:rat-pdes} interact to find a quadratization. First, if the PDE system has rational expressions, \texttt{QuPDE} executes Module~\ref{alg:rat-pdes} to polynomialize it. Once there is a polynomial PDE, \texttt{QuPDE} identifies its quadratization candidates by running Module~\ref{alg:mon-decomp}. Then, Module~\ref{alg:b&b} constructs the search tree that determines when Module~\ref{alg:verif-quad} verifies if a set is a quadratization, and runs Module~\ref{alg:mon-decomp} to generate the subproblems. Finally, it produces the best quadratization found for the given PDE. 

    \begin{figure}[H]
        \centering
        \includegraphics[width=\textwidth]{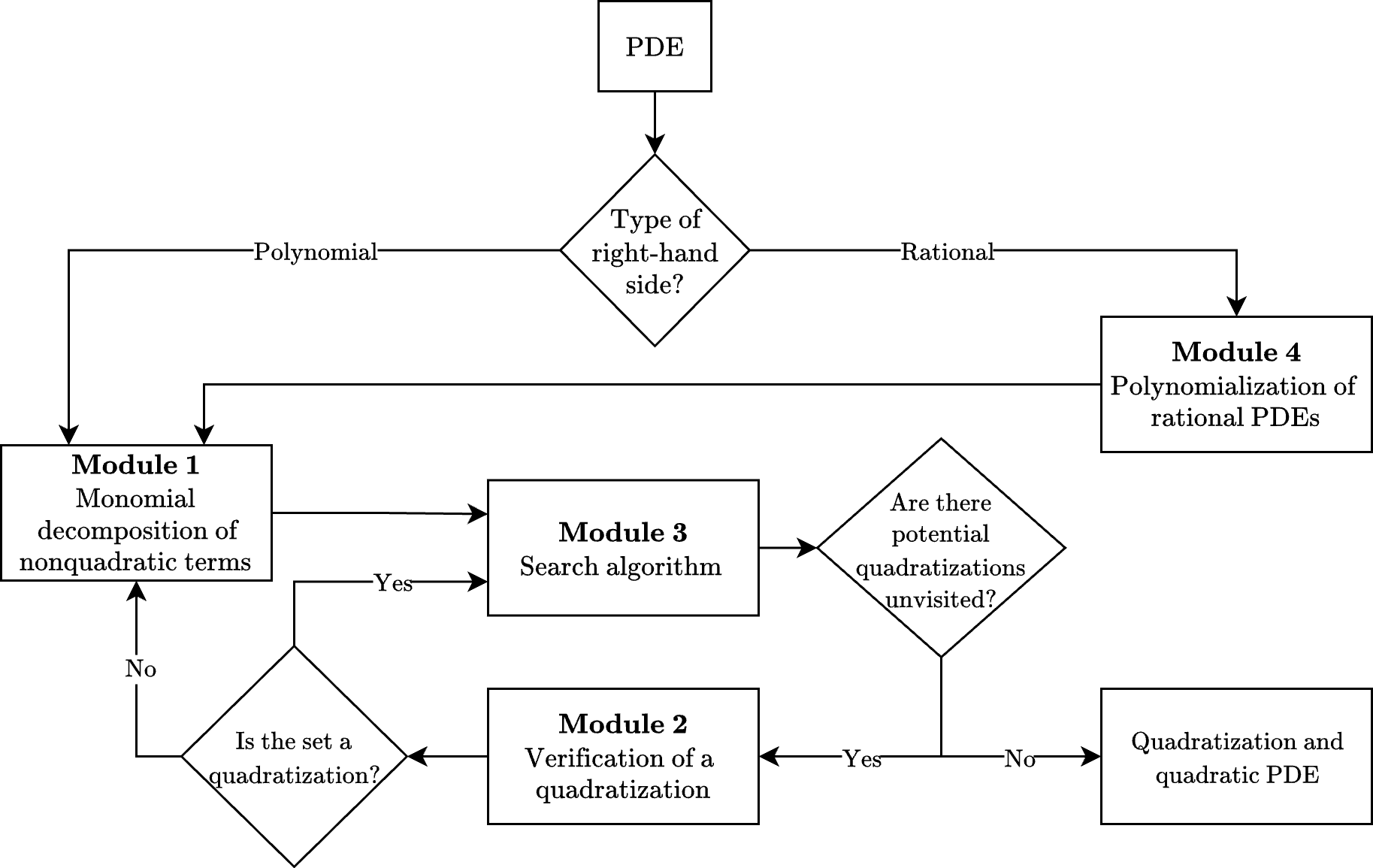}
        \caption{Flowchart for \texttt{QuPDE} to illustrate the interaction between the different modules.}
        \label{fig:qupde-diag}
    \end{figure}

\subsection{Practical aspects of \texttt{QuPDE}}
\label{sec:correct}

Theorem~\ref{thm:existence} shows that it is always possible to find a quadratization with \textnormal{\difforder} $3h$ for a polynomial PDE of the form~\eqref{eq:def-pde}. When this condition is met theoretically but \texttt{QuPDE} does not find a quadratization, it may be because (i) a user-specified value for the \textnormal{\difforder} of quadratization $k$ is not high enough, or (ii) the bound defined for the initial value of PR1 in Module~\ref{alg:b&b}---the maximum depth of the search tree allowed---did not suffice. In either case, following the steps in Proposition~\ref{prop:succ-search} ensures that a quadratization will be found.

    \begin{proposition}
        \label{prop:succ-search}
        For any polynomial and rational PDE of the form~\eqref{eq:def-pde}, the combined application of Module~\ref{alg:mon-decomp}, Module~\ref{alg:verif-quad}, Module~\ref{alg:b&b}, and (if necessary) Module~\ref{alg:rat-pdes} is guaranteed to find a quadratization by following the steps:
        \begin{enumerate}[noitemsep]
        \item Fix initial values for the bound on the auxiliary variables, $N$, and the \textit{\difforder} of quadratization, $k$.
        \item Run \texttt{QuPDE} with the chosen $N$ and $k$.
        \item If the search is unsuccessful, set $k=k+1$ and $N=2\cdot N$, and go back to step 2.
    \end{enumerate}
    \end{proposition}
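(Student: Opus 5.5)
The argument has two parts. First I reduce the rational case to the polynomial one. Then I show that the iteration in the statement eventually runs \texttt{QuPDE} with parameters large enough that a known quadratization lies inside the explored search space.

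For a rational system of the form~\eqref{eq:rat-pde}, Module~\ref{alg:rat-pdes} introduces the inverses $q_i=1/f_i$ of the irreducible denominator factors. Since $\partial_t q_i=-q_i^2\,\partial_t f_i$ and $\partial_x q_i=-q_i^2\,\partial_x f_i$, the extended system is polynomial in $\bu,\partial_x\bu,\dots$ and the $q_i$. Its spatial order is again $h$, provided each $q_i$ is treated as a new state whose derivatives are expressed polynomially. Reducing modulo the Gr\"obner basis $\cG$ only makes more polynomials quadratic, because it enlarges the span of $\cV^2$. Hence any quadratization of the polynomialized system that ignores the constraints remains valid under the reduction. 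So it suffices to treat polynomial systems of the form~\eqref{eq:def-pde}, possibly with extra states.

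For a polynomial system of order $h$, Theorem~\ref{thm:existence} gives a monomial quadratization $\cW_0$ of differential order $3h$. From its proof, $\cW_0$ consists of monomials in $\bu,\dots,\partial_x^{2h}\bu$, which is exactly what the default PR2 permits, and $\cW_0$ has some finite cardinality $N_0$. After finitely many iterations of step 3 we have $k\ge 3h$ and $N\ge N_0$, since $k$ increases by one and $N$ doubles each time. Monotonicity then gives that a quadratization of differential order $3h$ is also one of order $k$: enlarging $k$ only enlarges $\cV$ and hence $\cV^2$, and the admissible derivative orders of the auxiliary variables also grow. So from that iteration on, a quadratization exists within the allowed bounds.

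The remaining and hardest step is to show that the B\&B search of Module~\ref{alg:b&b} actually reaches some quadratization of size at most $N$ once one exists, and does not merely explore a tree that happens to miss it. I would first argue that the tree is finite. Depth is bounded by $N$ through PR1, and every node has finitely many children, namely the decompositions from Module~\ref{alg:mon-decomp}, each with derivative orders bounded by PR2. Finiteness means the depth-first search terminates and visits every unpruned node.

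I would then argue completeness by an invariant, which is the main obstacle. Let $\cW^\ast$ be any monomial quadratization within the bounds, and let $\cW\subseteq\cW^\ast$ be the current node, starting from $\cW=\varnothing$. If $\cW$ is not itself a quadratization, Module~\ref{alg:verif-quad} returns a nonquadratic monomial $m\in\cR_{\rm nq}$. Because $\cW^\ast$ quadratizes the system, $m$ should be expressible as a product of two elements of $\cV(\cW^\ast)$, so one of the decompositions of $m$ uses only variables from $\cW^\ast$. Following that child strictly enlarges $\cW\cap\cW^\ast$, so within $|\cW^\ast|\le N$ steps the search reaches a quadratization. The delicate points are twofold: the $\cV^2$ products involve derivatives $\partial_x^j w$ of auxiliary variables rather than the monomials themselves, and the reduced remainder need not single out one monomial of $\cW^\ast$. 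If this direct invariant fails, my fallback is to avoid relying on subtle search completeness. I would instead use the doubling of $N$ together with the subset-checking improvement, and show that the search eventually includes a node containing $\cW_0$, namely the closure-type set built from all monomials up to the degree bound. That node is a quadratization by Theorem~\ref{thm:existence} and the monotonicity argument above.
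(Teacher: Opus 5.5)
Your outline has the same structure as the paper's proof. Theorem~\ref{thm:existence} supplies a monomial quadratization of differential order $3h$. Growing $k$ and $N$ eventually places it within the bounds of PR1 and PR2. The search must then find something. The paper covers that last step only by invoking the general Branch-and-Bound property that the tree ``implicitly enumerates in a tree all candidate quadratizations within the defined search space.''

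You rightly identify this step as the real content, but your invariant fails at its key claim. The claim is that a monomial $m$ of a remainder in $\cR_{\rm nq}$ at a node $\cW\subseteq\cW^\ast$ factors as $m=ab$ with $a,b\in\{1\}\cup\{\partial_x^j u_i\}\cup\cW^\ast$. That argument works for monomial quadratizations of ODEs, where $\cV$ consists of monomials, so $\operatorname{span}\cV^2$ is spanned by monomials and membership can be tested monomial by monomial. Here $\cV$ also contains the derivatives $\partial_x^j w$ of auxiliary monomials, which are genuine polynomials (e.g.\ $\partial_x(u^2u_x)=2uu_x^2+u^2u_{xx}$). So $p\in\operatorname{span}\cV(\cW^\ast)^2$ can hold through cancellations among such products, without its individual monomials being products of elements of $\cW^\ast$. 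Moreover, the reduction against $\cV(\cW)^2$ creates monomials that do not occur in $p$ at all: in Example~\ref{ex:b&b}, $\overline{w_t}=6u_x^3u$ although $w_t=2u^3u_{xxx}$. Nothing therefore guarantees that some decomposition of the selected monomial stays inside $\cW^\ast$, and the induction breaks.

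The fallback does not repair this. Subset checking only shrinks a quadratization that has already been found, so it cannot produce the first one. Doubling $N$ only deepens branches that are themselves generated from decompositions of remainder monomials, so nothing forces a node containing $\cW_0$ to appear. Even if one did, a superset of a quadratization is in general not a quadratization, because each added monomial $w$ brings its own equation $\partial_t w$ that must also be quadratic.

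As it stands, completeness of the search, the one step not already given by Theorem~\ref{thm:existence}, is unproved. You must either take it, as the paper does, from the B\&B enumeration property, or supply a PDE-specific invariant that controls how the remainders of Module~\ref{alg:verif-quad} interact with derivatives of auxiliary variables.

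A smaller inaccuracy: the polynomialized rational system need not have spatial order $h$ when denominators involve derivatives. For $u_t=u_{xx}/u_x$ one gets $q_t=-q^3u_{xxx}+q^4u_{xx}^2$, of order $3$. This is not where your argument breaks.
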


    \begin{proof}

    Theorem~\ref{thm:existence} above guarantees that spatially one-dimensional evolutionary PDEs of the form~\eqref{eq:def-pde} admit a quadratization of \textnormal{\difforder} $3h$, that is, with solutions in $C^{3h}(\Omega)$. Moreover, \texttt{QuPDE}'s search algorithm is based on the branch-and-bound framework, which implicitly enumerates in a tree all candidate quadratizations within the defined search space, guaranteeing that the optimal solution in the tree is found while avoiding an exhaustive search \cite{B&B}. Therefore, \texttt{QuPDE} can always find a quadratization by increasing both the number of auxiliary functions and the \textnormal{\difforder} $k$ up to $3h$.
    \end{proof}

\texttt{QuPDE} is a symbolic search algorithm that only knows the smoothness of a PDE solution through the user-specified \textnormal{\difforder} of quadratization $k$. Consequently, (i) \texttt{QuPDE} can find quadratizations for PDEs with $u\in C^{\ell}(\Omega)$, where $\ell<3h$;
and (ii) \texttt{QuPDE} could output quadratizations for a PDE that assume a solution is $\ell+j$ times differentiable, even if the original solution of the PDE is only in $C^\ell(\Omega)$. An example of case (ii) is shown below.
\begin{example}[\texttt{QuPDE}'s smoothness assumption]
    Consider the PDE $u_t = u_{x}^2u$ with $\ell=1$, i.e., $u\in C^{1}(\Omega)$. We define the auxiliary variable $w:= u_xu$ and set $k=2$ to calculate $w_x=u_{xx}u+u_{x}^2$. The set $\{w=u_xu\}$ is a quadratization for this PDE and yields the quadratic system given by $u_t = u_xw$ and $w_t = 2w_xw$.
    Since $k$ was set to a value larger than $\ell$, this variable transformation assumes that $u$ is a solution $\ell+1$ times differentiable, where the expression for $w_x$ has a spatial derivative of $u$ of order two.  
\end{example}

\section{Benchmarks and Results}
\label{sec:bench-res}

This section presents the results of using \texttt{QuPDE} to find quadratizations for fourteen nonquadratic spatially one-dimensional PDEs from the literature of diverse areas such as fluid mechanics, space physics, chemical engineering, and biological processes. Section~\ref{sec:benchmark} lists the nonquadratic PDEs, Section~\ref{sec:res} shows the results of searching for a quadratization for each of them, and Section~\ref{sec:res-qbee} presents how \texttt{QuPDE} compares with \texttt{QBee} \cite{quadODE}, the only alternative software that offers an algorithmic solution for the PDE quadratization problem via the detour of semi-discretization.

\subsection{PDE models}
\label{sec:benchmark}

The governing equations for each PDE benchmark are shown below. We omit the initial and boundary conditions since the quadratizations are found based solely on the symbolic form of the PDEs, c.f. Remark~\ref{rem:scope}.
    
    \paragraph{Dym equation} 
    The Harry Dym equation \cite{KUMAR2013111} 
    arises in several areas, such as the analysis of the Saffman-Taylor problem with surface tension \cite{Kadanoff1990} and acoustic theory \cite{Zakharov2007}. This equation is completely integrable and represents a system in which dispersion and nonlinearity are coupled:
    \begin{equation}
    \label{eq:dym}
        u_{t}=u^{3}u_{xxx}.
    \end{equation}

    \paragraph{Arrhenius-type reaction model}
    
    The non-adiabatic tubular reactor model in \cite{Heinemann1981} describes species concentration and temperature evolution in a single reaction through
    \begin{align}
    \label{eq:tubular-exp-1}
        u_t &= \epsilon u_{xx} - u_x - \mathcal{D}uf(v), \\
    \label{eq:tubular-exp-2}
        v_t &= \epsilon v_{xx} - v_x - \beta(v - \theta_\text{ref}) + \mathcal{B}\mathcal{D}uf(v),
    \end{align} 
    where $f(v)=e^{\gamma-\frac{\gamma}{v}}$ is an Arrhenius nonlinear reaction term describing the reaction rate. Here, $\mathcal{D}$ is the Damk{\"o}hler number, $\epsilon = \frac{1}{Pe}$, with $Pe$ the P\`eclet number and $\mathcal{B}, \beta, \gamma, \theta_\text{ref}$ are known constants of the system. 

    To use \texttt{QuPDE} on this model, we apply a variable transformation to~\eqref{eq:tubular-exp-1} and \eqref{eq:tubular-exp-2} by introducing the auxiliary variable $y = e^{\gamma-\frac{\gamma}{v}}$. Then, we write the rational PDE system
    \begin{align}
    \label{eq:tubular-exp-pol-1}
        u_t & = \epsilon u_{xx} - u_x - \mathcal{D}uy, \\
        \label{eq:tubular-exp-pol-2}
        v_t & = \epsilon v_{xx} - v_x - \beta(v - \theta_\text{ref}) + \mathcal{B}\mathcal{D}uy,\\
        \label{eq:tubular-exp-pol-3}
        y_t & = \dfrac{\gamma}{v^2}y\left(\epsilon v_{xx} - v_x - \beta(v - \theta_\text{ref}) + \mathcal{B}\mathcal{D}uy\right).
    \end{align} 
    
    \paragraph{Polynomial reaction model}
    
    An approximation via Taylor expansion of the non-adiabatic tubular reactor model \cite{KW2019_balanced_truncation_lifted_QB} described in \eqref{eq:tubular-exp-1} and \eqref{eq:tubular-exp-2} is
    \begin{align}
    \label{eq:tubular-pol-1}
        u_t &= \epsilon u_{xx} - u_x - \mathcal{D}f_d(v), \\
        \label{eq:tubular-pol-2}
        v_t & = \epsilon v_{xx} - v_x - \beta(v - \theta_\text{ref}) + \mathcal{B}\mathcal{D}uf_d(v),
    \end{align} 
    where $f_d(v)$ is a Taylor series approximation of the exponential Arrhenius term, i.e., $f_d(v) = c_0 + c_1v + c_2v^2 + \dots + c_{d}v^d$. Section~\ref{sec:res} shows results with approximations of polynomial degrees three, four, and five.
    
    \paragraph{Modified KdV equation}  
    The modified Korteweg-De Vries equation \cite{kdv} is a model for the study of weakly nonlinear long waves, incorporating leading nonlinearity and dispersion. It also describes surface waves of long wavelength and small amplitude in shallow water, and has the form
    \begin{equation}
    \label{eq:kdv}
        u_t = au^2 u_x - u_{xxx}, 
    \end{equation}
    where $a>0$.
    
    \paragraph{Solar wind model} 
    The HUX (Heliospheric Upwinding eXtrapolation) model \cite{solarwind_2023} is a two-dimensional time-stationary model that predicts the heliospheric solar wind speed via the PDE
    \begin{equation}
    \label{eq:solar-wind}
        u_r = \frac{\Omega_{\rm{rot}}u_\phi}{u}, 
    \end{equation}
    where $\Omega_{\rm{rot}}$ is the angular frequency on the Sun's rotation evaluated at a constant Carrington latitude. In our notation, $\phi$ is the spatial variable and $r$ plays the role of time.
    
    \paragraph{Schl{\"o}gl model}
    Schl{\"o}gl's model \cite{schlogl_model} is a simple example of a chemical reaction system that exhibits bistability. We consider a version of this model without input functions:
    \begin{equation}
        \label{eq:schlogl}
        u_t = u_{xx} - k(u - u_1)(u - u_2)(u - u_3),
    \end{equation} 
    with $k\geq 0$ and real numbers $u_1 < u_2 < u_3$.
    
    \paragraph{Euler equations}
    The Euler equations \cite{EULEREQ} are derived from the physical principles of conservation of mass, momentum, and energy and describe an inviscid flow through the equations
    \begin{align}
        \label{eq:euler}
        \rho_t & = -u\rho_x - \rho u_x, \\
        u_t & = -u_xu - \frac{p_x}{\rho}, \\
        p_t & = -\gamma u_xp - up_x,
    \end{align} 
    where $\gamma$ is the ratio of specific heats. 
    
    \paragraph{FitzHugh-Nagamo system} 
    The FitzHugh-Nagamo system \cite{FHN} is a simplified neuron model of the Hodgkin-Huxley model, which describes activation and deactivation dynamics of a spiking neuron via the PDE system
    \begin{align}
    \label{eq:fhn}
        v_t & = \epsilon v_{xx} + \dfrac{1}{\epsilon}v(v - 0.1)(1 - v) - \dfrac{1}{\epsilon}u + \dfrac{1}{\epsilon}c, \\
        u_t & = bv - \gamma u + c,
    \end{align} 
    where $\epsilon, b, \gamma,c$ are real and positive constants. 
    
    \paragraph{Schnakenberg equations}
    The Schnakenberg equations \cite{schnakenberg} are evolution equations for reaction-diffusion systems with cross-diffusion, given by
    \begin{align}
    \label{eq:schnakenberg}
        u_t & = D_u u_{xx} + D_{uv}v_{xx} + k_1a_1 - k_2u + k_3u^2v, \\
        v_t & = D_v v_{xx} + D_{vu}u_{xx} + k_4b_1 - k_3u^2v,
    \end{align} 
    where $D_u > 0$, $D_v > 0$, $D_{uv}$ and $D_{vu}$ are diffusion and cross-diffusion coefficients, and $a_1$, $b_1$, $k_1$, $k_2$, $k_3$, $k_4$ are all positive constants.
    
    \paragraph{Brusselator system}
    The Brusselator \cite{brusselator} models morphogenesis and pattern formation in chemical reactions through the equations
    \begin{align}
        \label{eq:brusselator}
        u_t&  = d_1 u_{x} + \lambda (1 - (b + 1)u + bu^2v), \\
        v_t&  = d_2 v_{x} + \lambda a^2(u - u^2v),
    \end{align} 
    where $d_1$, $d_2$, $\lambda$, $a$, $b$ are positive constants.
    
    \paragraph{Allen-Cahn equation}  
    The Allen-Cahn equation \cite{Allen1979} describes the motion of antiphase boundaries in crystalline solids. It was proposed as a simple model for the process of phase separation of a binary alloy at a fixed temperature:
    \begin{equation}
        \label{eq:cahn}
        u_t = u_{xx} + u - u^3.
    \end{equation}
    
    \paragraph{One-dimensional nonlinear heat equation}  
    The nonlinear heat equation \cite{Bandle1998} presents a rich mathematical structure for studying blow-up phenomena, where solutions cease to exist after a finite time. The equation takes the form
    \begin{equation}
        \label{eq:heat}
        u_t = u_{xx} + u^p,
    \end{equation}
    with $p>1$. For our benchmark, we performed tests with $p=6$.

\subsection{Results}
\label{sec:res}
The results presented in this section were obtained using our software implementation of \texttt{QuPDE} \cite{qupde} version 0.1.0\footnote{this version is available at \url{https://doi.org/10.5281/zenodo.18750665}.
 Current version and support available at \url{https://github.com/albaniolivieri/pde-quad.git}}, which is developed as a Python package
built using the \texttt{Sympy} library \cite{sympy}. All experiments were run on an Apple M4 Pro personal laptop with clock speed of 4.5 GHz, 48GB RAM, and macOS Tahoe 26.2 with Python version 3.14.2. In Table~\ref{tab:gen-b&b}, we show the results of running \texttt{QuPDE} on each example described in Section~\ref{sec:benchmark}. For each PDE, we follow the procedure in Proposition~\ref{prop:succ-search} to find the lowest \textnormal{\difforder} at which \texttt{QuPDE} finds a quadratization, fixing as the initial value the order of the original PDE. Moreover, we use H3 as the sorting heuristic since it performed best overall. 
    \begin{table}[H]
    \centering
    \caption{Results of applying \texttt{QuPDE} to the fourteen PDE problems from above. Listed are the auxiliary variables the algorithm introduced, the execution time, and the number of nodes in the search tree that the algorithm traversed. Examples are ordered by CPU time.}
    \label{tab:gen-b&b}
    \begin{tabular}{l|c|c|c}
        PDE & Quadratization variables & CPU time & Nodes traversed \\ \hline
        Solar wind model & $1/u$ & 8.8 $\pm$ 0.1 [ms] & 1 \\
        Allen-Cahn equation & $u^2$ & 22.4 $\pm$ 0.1 [ms] & 3\\ 
        Schl{\"o}gl model & $u^2$ & 43.8 $\pm$ 0.2 [ms] & 3 \\ 
        Modified KdV & $u^2$ & 54.2 $\pm$ 0.9 [ms] & 4 \\
        Euler equations & $1/\rho$ & 54.6 $\pm$ 1.1 [ms] & 1 \\ 
        FHN system & $v^2$ & 115.1 $\pm$ 1.1 [ms] & 3 \\ 
        Brusselator system & $u^2$, $uv$ & 129.6 $\pm$ 1.2 [ms] & 8\\ 
        Nonlinear heat equation & $u^2$, $u^4$, $u^5$ & 205.1 $\pm$ 2.5 [ms] & 27  \\ 
        Schnakenberg equations & $uv$, $u^2$ & 445.0 $\pm$ 4.1 [ms] & 8 \\ 
        Dym equation & $u^3$, $u_{x}^2u$ & 628.9 $\pm$ 5.5 [ms] & 21 \\
        Polynomial reaction ($d = 3$) & $uv$, $v^2$, $v^2u$, $v^3$ & 9.9 $\pm$ 0.1 [s] & 69 \\ 
        Polynomial reaction ($d = 4$) & $uv$, $v^2$, $v^2u$, $v^4$, $v^3u$ & 62.5 $\pm$ 0.4 [s] & 305 \\ 
        Arrhenius-type reaction & \makecell{$1/v$, $1/v^2$, $uv$, $uy/v$,\\ $uy/v^2$, $y/v$, $y/v^2$} & 179.1 $\pm$ 0.4 [s] & 491 \\
        Polynomial reaction ($d = 5$)& $uv$, $v^2$, $v^3$, $v^3u$, $v^4u$, $v^5$ & 636.1 $\pm$ 4.6 [s] & 2107 \\
         
    \end{tabular}
\end{table}

Table~\ref{tab:gen-b&b} shows that, for most of the examples, \texttt{QuPDE} produces the best quadratization found within the order of milliseconds, even for PDEs with rational nonlinearities. Moreover, in nine out of fourteen cases, it finds quadratizations of only one or two auxiliary variables. The last four examples, in which \texttt{QuPDE} produces the largest quadratizations and requires the longest computation time, correspond to the highly nonlinear models for the non-adiabatic tubular reactor. Two of these systems are polynomials of degree five and six, since the degree of $f_d$ in~\eqref{eq:tubular-pol-1} and~\eqref{eq:tubular-pol-2} is four and five, respectively; and the Arrhenius-type reaction model corresponds to the rational PDE system in~\eqref{eq:tubular-exp-pol-1}, \eqref{eq:tubular-exp-pol-2}, and~\eqref{eq:tubular-exp-pol-3}. It is worth noting that for the last three examples, the time it takes for \texttt{QuPDE} to find a quadratization, compared to the time for verification, varies considerably. For instance, in the Arrhenius-type reaction model, \texttt{QuPDE} finds the first quadratization within 17 seconds, while the remaining 166 seconds are spent searching for a lower-order quadratization alternative through Module~\ref{alg:b&b}. Moreover, to the best of our knowledge, the quadratizations presented here are the first for the nonlinear heat equation and the polynomial reaction models with $d=4$ and $d=5$.

\subsection{Comparison to the state-of-the-art algorithm \texttt{QBee}}
\label{sec:res-qbee}

The \texttt{QBee} algorithm for ODEs \cite{bychkov2023exact} offers the only available alternative for finding quadratizations of PDEs via the detour of semi-discretization. Therefore, we compare \texttt{QuPDE} with results obtained from \texttt{QBee} \cite[Algorithm~5.3]{bychkov2023exact} on published semi-discretized forms of four PDE systems: the solar wind model \cite{solarwind_2023}, the Allen-Cahn equation \cite{ALLENCAHNDISC}, and two of the non-adiabatic tubular reactor models: the polynomial approximation with $d=3$ \cite{KW2019_balanced_truncation_lifted_QB} and the Arrhenius-type reaction \cite{zhou2012model}. This comparison is shown in Table~\ref{tab:qbee}. 
\begin{table}[H]
        \centering
        \caption{Comparison between \texttt{QBee} dimension-agnostic quadratization and the proposed \texttt{QuPDE}. Listed are the execution times and the order of the best quadratization found for each PDE.}
        \label{tab:qbee}
        \begin{tabular}{l|c | c | c | c}
            \multirow{2}{7em}{PDE} & \multicolumn{2}{c|}{\texttt{QBee}} & \multicolumn{2}{c}{\texttt{QuPDE}} \\ 
            \cline {2-5}&  CPU time & Quad. order & CPU time & Quad. order \\
            \hline 
             Solar wind model & 0.47 [s] & 4 & 0.01 [s] & 1 \\ 
             Allen-Cahn equation & 0.06 [s] & 1 & 0.02 [s] & 1 \\ 
             Polynomial reaction $(d=3)$ & 0.54 [s] & 4 & 9.94 [s] & 4 \\ 
             Arrhenius-type reaction & 76.77 [s] & 8 & 179.13 [s] & 6
        \end{tabular}
\end{table}

In Table~\ref{tab:qbee}, we see significant differences in the results of the two algorithms. In the solar wind example and the Arrhenius-type reaction model, our algorithm outperforms \texttt{QBee} in the number of variables introduced. For the solar wind model, \texttt{QBee} algorithmically produces a quadratization with four auxiliary variables. However, the semi-discretization of~\eqref{eq:solar-wind} introduces a logarithmic expression, requiring an additional auxiliary variable $w:=\ln(u)$ from the polynomialization procedure. Consequently, using the \texttt{QBee} approach yields to a quadratization of five auxiliary variables: $\{w, 1/u, w/u, \tilde{u}/u,\tilde{w}/u\}$, where $\tilde{u}$ and $\tilde{w}$ represent the nonzero off-diagonal elements of the discretization system matrix, see \cite[Section~7.2]{bychkov2023exact} for more details. This adds five differential equations to~\eqref{eq:solar-wind}, 
whereas \texttt{QuPDE} requires a single auxiliary variable $1/u$ through the application of Module~\ref{alg:rat-pdes}. In the Arrhenius-type reaction model, \texttt{QBee} yields a quadratization with two more auxiliary variables than \texttt{QuPDE}: $\{1/v, 1/v^2, uy/v^2, uy/v, \tilde{v}/v, \tilde{u}/v, \tilde{v}/v^2, \tilde{u}/v^2\}$, where the variables $\tilde{u}, \tilde{v}$ correspond to the discrete representation of the PDE system in~\eqref{eq:tubular-exp-1} and~\eqref{eq:tubular-exp-2}. For the polynomial reaction model and the Allen-Cahn equation, \texttt{QuPDE} and \texttt{QBee} produce the same quadratization. On the other hand, while our algorithm takes longer to find the same quadratization for the polynomial reaction model, its time performance is similar to that of \texttt{QBee} when finding quadratizations for the Allen-Cahn equation and the solar wind model, finding a lower-order quadratization for the latter.

We emphasize once again that the state-of-the-art method \texttt{QBee} is limited to finding quadratizations for semi-discretized PDEs that are affine in their spatial derivatives. Discretizing nonlinear PDEs is a complex and often cumbersome task, involving many choices in the process that can result in different semi-discretized forms. It also demands a solid understanding of the PDE's structural properties and its domain. Moreover, requiring users to provide a pre-discretized PDE makes the process of finding quadratizations significantly less user-friendly. As a result, only a limited number of examples are suitable for \texttt{QBee}. On the other hand, the proposed \texttt{QuPDE} offers a user-oriented design that works directly on the symbolic form of any polynomial and rational PDE.

\section{Conclusions}
\label{sec:conclusions}

The concept of variable transformation has been widely explored to bring nonquadratic PDEs to a simpler---here quadratic---form, where control, analysis, and model reduction become easier. While one can find quadratizations for ODE systems using two freely available computational tools \cite{quadODE, hemery2020complexity}; for PDEs, the two options available are either to derive a quadratization by hand or to take the detour of semi-discretization of the PDE and use the ODE quadratization algorithm in \cite{bychkov2023exact}. Both approaches are suboptimal and, in the latter case, require expert knowledge in PDE discretization and only applies to a subclass of PDEs.

The main contribution of the present work is the algorithm \texttt{QuPDE} for finding quadratizations of spatially one-dimensional evolutionary PDEs with nonlinearities in their right-hand side, along with a freely available software implementation \cite{qupde}. This algorithm rests on an initial theoretical groundwork that we establish in this paper: a rigorous definition of PDE quadratization, a theoretical result relating the solutions of the lifted system to those of the original PDE for the modified KdV equation, a proof that every polynomial PDE admits a quadratization under certain regularity conditions, and a demonstration that finding quadratic transformations for PDEs via auxiliary variables is an NP-hard problem.
The \texttt{QuPDE} algorithm works directly with the PDE formulation of the quadratization problem for any polynomial or rational spatially one-dimensional PDE. To accomplish this, \texttt{QuPDE} employs a symbolic computation-based approach that relies on an optimization algorithm using the Branch-and-Bound framework to make the search for a low-order quadratization feasible. We speed up the search by implementing pruning rules, sorting heuristics, and computational strategies to reduce the number of traversed nodes. \texttt{QuPDE} also includes a polynomialization step for rational functions, a tool to handle a wider range of nonlinear models.

We tested \texttt{QuPDE} on fourteen PDE models from diverse areas such as fluid mechanics, space physics, chemical engineering, and biological processes. In each case, \texttt{QuPDE} delivered a low-order quadratization for the benchmarks presented. As demonstrated in our results, \texttt{QuPDE} has (i) discovered new quadratizations that needed fewer auxiliary variables than previously reported in the literature, and, most importantly, does not impose additional requirements such as semi-discretization; and (ii) found entirely new quadratizations for systems that had not been brought to quadratic form before, such as the nonlinear heat equation and the high-degree polynomial models of the tubular reactor model.

This work motivates multiple future research directions. First, there is a need to develop theory related to the solution, properties, and the initial boundary value problem of the quadratic PDE representations. Second, designing and implementing a polynomialization algorithm that extends PDE quadratization to nonpolynomial systems beyond rational functions would be an extension worth exploring; the nonlinear sine-Gordon equation \cite{Barone1971} illustrates the advantages of this technique, which becomes quadratic after polynomialization. Third, another promising direction is the development of methods to further reduce the number of equations of the resulting quadratic PDE systems. Fourth, the algorithm could be extended to guarantee globally optimal quadratizations and to improve efficiency through enhanced search strategies, e.g., random tree search, and additional physics-informed pruning rules. Fifth, since many applications require high-dimensional PDE models, it would be worthwhile to generalize the theory and algorithmic tools to address PDEs in a higher spatial dimension. Lastly, an interesting extension of \texttt{QuPDE} would be to automate the search for structure-preserving quadratizations, building on the ideas proposed in \cite{Sharma2025energy, Sharma2026} for energy-conserving systems.

\subsection*{Acknowledgments}
The authors would like to thank Gonzalo Navarro for his valuable input on the practical aspects of \texttt{QuPDE}. B.K. and A.O. are supported in part through NSF-CMMMI award 2144023. G.P. was partially supported by the INS2I PANTOMIME project and the French ANR-22-CE48-0016 NODE project.
\bibliography{references}

\bibliographystyle{unsrt}

\end{document}